\documentclass[10pt,twocolumn,twoside]{IEEEtran}
\usepackage[utf8]{inputenc}
\usepackage{amsmath}

\renewcommand{\thesection}{S.\Roman{section}}

\usepackage{mathrsfs}

\usepackage{amssymb}
\usepackage{braket}
\usepackage{amsmath,mathtools}
\usepackage{float}
\usepackage{graphicx}
\usepackage{enumerate}
\usepackage{bbm}
\DeclareMathOperator{\Tr}{Tr}
\usepackage{verbatim}
\usepackage{accents}
\usepackage{authblk}
\usepackage{xr}
\usepackage{hyperref}
\externaldocument{Binary_Matrix_combined}

%


\usepackage{amsmath, amssymb, bbm, xspace}
\usepackage{epsfig}
\usepackage{longtable}
\usepackage{color}
\usepackage{mathrsfs}
\usepackage{comment}
\usepackage{ifthen}
\newboolean{showcomments}
\setboolean{showcomments}{true}
\usepackage{courier}
\usepackage{xcolor}
\usepackage{todonotes}
\usepackage[framemethod=tikz]{mdframed}
\usepackage{lineno}
\newmdenv[leftmargin=\dimexpr-0.4em, innerleftmargin=0.5em,
rightmargin=\dimexpr-0.4em, innerrightmargin=0.5em,
linewidth=2pt,linecolor=red, topline=false, bottomline=false,
innertopmargin=0pt,innerbottommargin=0pt,skipbelow=0pt,skipabove=0pt,%
]{notex}

\newenvironment{note}%
{\vskip\dimexpr\dp\strutbox-\prevdepth\relax\notex\strut\ignorespaces}%
{\xdef\notetpd{\the\prevdepth}\endnotex\vskip-\notetpd\relax}

\let\oldtodo\todo

\makeatletter%
\DeclareDocumentCommand{\todo}{ O{} +g +d<> }{%
		\setlength{\marginparwidth}{1.5cm}%
	\IfNoValueTF{#2}{\relax}{%
		\oldtodo[caption={#2},size=\scriptsize,#1]{\renewcommand{\baselinestretch}{0.8}\selectfont\sffamily#2\par}%
	}%
	\IfNoValueTF{#3}{\relax}{%
		\IfNoValueTF{#2}{
			\begin{note}%
				\begin{internallinenumbers}%
					\indent%
					#3%
				\end{internallinenumbers}%
			\end{note}%
		}{
			\vspace{-0\baselineskip}%
			\begin{note}%
				\begin{internallinenumbers}%
					\indent%
					#3%
				\end{internallinenumbers}%
			\end{note}%
		}%
	}%
}%
\makeatother
\usepackage{soul}

\newcommand{\hlc}[2][yellow]{{%
		\colorlet{foo}{#1}%
		\sethlcolor{foo}\hl{#2}}%
}

\newcommand{\removetodo}[2]{\todo[color=pink]{\textbf{delete:} ``#1'' #2}\hlc[pink]{#1}}
\newcommand{\inserttodo}[1]{\todo[color=green!40]{\textbf{insert:} #1}}

\newcommand{\hltodoy}[2]{\todo[color=yellow!40]{#2}\hl{#1} }

\newcommand{\hltodoc}[3]{\todo[color=#3!40]{#2}\hlc[#3]{#1} }

\newcommand{\hltodo}[2]{\todo[color=orange!40]{#2}\hlc[orange!40]{#1} }
\newcommand{\replacetodo}[2]{\todo[color=pink!40]{\textbf{replace with:}``#2'' }\hl{#1} }

\usepackage{marginnote}
\newcommand{\todol}[1]{{%
		\let\marginpar\marginnote
		\reversemarginpar
		\renewcommand{\baselinestretch}{0.8}%
		\todo{#1}}}
	
\newcommand{\inserttodol}[1]{{%
		\let\marginpar\marginnote
		\reversemarginpar
		\renewcommand{\baselinestretch}{0.8}%
		\inserttodo{#1}}}
	
\newcommand{\removetodol}[2]{{%
		\let\marginpar\marginnote
		\reversemarginpar
		\renewcommand{\baselinestretch}{0.8}%
		\removetodo{#1}{#2}}}

\newcommand{\hltodol}[2]{{%
		\let\marginpar\marginnote
		\reversemarginpar
		\renewcommand{\baselinestretch}{0.8}%
		\hltodo{#1}{#2}}}

\newcommand{\replacetodol}[2]{{%
		\let\marginpar\marginnote
		\reversemarginpar
		\renewcommand{\baselinestretch}{0.8}%
		\replacetodo{#1}{#2}}}

\newcommand{\hltodoyl}[2]{{%
		\let\marginpar\marginnote
		\reversemarginpar
		\renewcommand{\baselinestretch}{0.8}%
		\hltodoy{#1}{#2}}}

\newcommand{\hltodocl}[3]{{		\let\marginpar\marginnote
		\reversemarginpar
		\renewcommand{\baselinestretch}{0.8}%
		\hltodoc{#1}{#2}{#3}}}



\newtheorem{theorem}{Theorem}[section]

\newtheorem{lemma}[theorem]{Lemma}

\newtheorem{proposition}[theorem]{Proposition}

\newtheorem{corollary}[theorem]{Corollary}

\newtheorem{definition}{Definition}[section]

\newcommand{\comb}[2]{\prescript{#1}{}C_{#2}}

%

\def\bkE{{\rm I\kern-.17em E}}
\def\bk1{{\rm 1\kern-.17em l}}
\def\bkD{{\rm I\kern-.17em D}}
\def\bkR{{\rm I\kern-.17em R}}
\def\bkP{{\rm I\kern-.17em P}}

\def\bkZ{{\bf{Z}}}

\def\bkE{{\rm I\kern-.17em E}}
\def\bk1{{\rm 1\kern-.17em l}}
\def\bkD{{\rm I\kern-.17em D}}
\def\bkR{{\rm I\kern-.17em R}}
\def\bkP{{\rm I\kern-.17em P}}

\makeatletter
\newcommand{\pushright}[1]{\ifmeasuring@#1\else\omit\hfill$\displaystyle#1$\fi\ignorespaces}
\newcommand{\pushleft}[1]{\ifmeasuring@#1\else\omit$\displaystyle#1$\hfill\fi\ignorespaces}
\makeatother


\def\bkZ{{\bf{Z}}}
\def\b12{(\beta_1,\beta_2)}

\newenvironment{proofarg}[1][]{\noindent\hspace{2em}{\itshape Proof #1: }}{\hspace*{\fill}~\qed\par\endtrivlist\unskip}
\newenvironment{example}{{\noindent \bf Example}}{\hfill $\square$\hspace{-4.5pt}\vspace{6pt}}
\newcounter{example}
\renewcommand{\theexample}{\thesection.\arabic{example}}

\newcounter{remark}
\renewcommand{\theremark}{\thesection.\arabic{remark}}

\def\t{^\top}
\def\Bscr{\mathscr{B}}
\def\Xscr{\mathcal{X}}

\newlength{\noteWidth}
\setlength{\noteWidth}{.75in}
\long\def\notes#1{\ifinner
{\tiny #1}
\else
\marginpar{\parbox[t]{\noteWidth}{\raggedright\tiny #1}}
\fi\typeout{#1}}

 \def\notes#1{\typeout{read notes: #1}} 









\newcommand{\ie}{i.e.\@\xspace} 



\newcommand{\inv}{^{-1}}

\def\Pbb{{\mathbb{P}}}

\def\Ibf{{\bf I}}

\def\dim{\mathop{\hbox{\rm dim}}}

\def\Span{\mbox{\rm span}}

\def\half  {{\textstyle{1\over 2}}}

\def\conv{\textrm{conv}\:}

\def\spose#1{\hbox to 0pt{#1\hss}}

\def\text #1{\hbox{\quad#1\quad}}


\def\nthinsp{\mskip -2   mu}




\def\superstar{^{\raise 0.5pt\hbox{$\nthinsp *$}}}
\def\SUPERSTAR{^{\raise 0.5pt\hbox{$*$}}}

\def\lamstarT {\lambda^{\raise 0.5pt\hbox{$\nthinsp *$}T}}



\def\Bscr{{\cal B}}

\def\Hscr{{\cal H}}
\def\Lscr{{\cal L}}

\def\Pscr{{\cal P}}
\def\Qscr{{\cal Q}}
\def\Sscr{{\cal S}}

\def\Uscr{{\cal U}}

\def\Nscr{{\cal N}}

\def\Cscr{{\cal C}}

\def\Xscr{{\cal X}}

\def\aur{\;\textrm{and}\;}

\def\bfI{{\bf I}}
\let\forallnew\forall
\renewcommand{\forall}{\forallnew\ }
\let\forall\forallnew

		\def\bkE{{\rm I\kern-.17em E}}
		\def\bk1{{\rm 1\kern-.17em l}}
		\def\bkD{{\rm I\kern-.17em D}}
		\def\bkR{{\rm I\kern-.17em R}}
		\def\bkP{{\rm I\kern-.17em P}}
		\def\bkY{{\bf \kern-.17em Y}}
		\def\bkZ{{\bf \kern-.17em Z}}
		\def\bkC{{\bf  \kern-.17em C}}


%
{\begin{list}{}%
         {\setlength{\leftmargin}{#1}}%
         \item[]%
}
{\end{list}}

\def\Tsf{\mathsf{T}}
\def\Esf{\mathsf{E}}

\def\Rsf{\mathsf{R}}
		\def\bsp{\begin{split}}
		\def\beq{\begin{eqnarray}}
		\def\bal{\begin{align*}}
		\def\bc{\begin{center}}
		\def\be{\begin{enumerate}}
		\def\bi{\begin{itemize}}
		\def\bs{\begin{small}}
		\def\bS{\begin{slide}}
		\def\ec{\end{center}}
		\def\ee{\end{enumerate}}
		\def\ei{\end{itemize}}
		\def\es{\end{small}}
		\def\eS{\end{slide}}
		\def\eeq{\end{eqnarray}}
		\def\eal{\end{align*}}
		\def\esp{\end{split}}
		\def\qed{ \vrule height7.5pt width7.5pt depth0pt}  

	\def\cp2problem#1#2#3#4{\fbox
		 {\begin{tabular*}{0.9\textwidth}
			{@{}l@{\extracolsep{\fill}}l@{\extracolsep{6pt}}l@{\extracolsep{\fill}}c@{}}
				#1 & & $#4 $ 
			\end{tabular*}}}

		\def\bkE{{\rm I\kern-.17em E}}
		\def\bk1{{\rm 1\kern-.17em l}}
		\def\bkD{{\rm I\kern-.17em D}}
		\def\bkR{{\rm I\kern-.17em R}}
		\def\bkP{{\rm I\kern-.17em P}}
		
		\def\bkZ{{\bf{Z}}}

\newcommand {\beeq}[1]{\begin{equation}\label{#1}}
\newcommand {\eeeq}{\end{equation}}
\newcommand {\bea}{\begin{eqnarray}}
\newcommand {\eea}{\end{eqnarray}}

\def\texitem#1{\par\smallskip\noindent\hangindent 25pt
               \hbox to 25pt {\hss #1 ~}\ignorespaces}



\def\bsp{\begin{split}}
		\def\beq{\begin{eqnarray}}
		\def\bal{\begin{align*}}
		\def\bc{\begin{center}}
		\def\be{\begin{enumerate}}
		\def\bi{\begin{itemize}}
		\def\bs{\begin{small}}
		\def\bS{\begin{slide}}
		\def\ec{\end{center}}
		\def\ee{\end{enumerate}}
		\def\ei{\end{itemize}}
		\def\es{\end{small}}
		\def\eS{\end{slide}}
		\def\eeq{\end{eqnarray}}
		\def\eal{\end{align*}}
		\def\esp{\end{split}}
		\def\qed{ \vrule height7.5pt width7.5pt depth0pt}  


\usepackage{amsmath, amssymb, xspace}
\usepackage{epsfig}
\usepackage{longtable}
\usepackage{color}
\usepackage{mathrsfs}
\usepackage{subfig}
\def\Cscr{{\cal C}}
\def\Nscr{{\cal N}}

\def\conv{{\rm conv}}





%

\makeatletter

\makeatother

%

\makeatletter

\makeatother


\makeatletter

\makeatother

\reversemarginpar
\title{\bf The Quantum Advantage in Binary Teams and the Coordination Dilemma:  Supplementary Material}
\author[2]{Shashank A. Deshpande}
 \author[1]{Ankur A. Kulkarni}
\affil[2]{Department of Aeronautics and Astronautics, Massachusetts Institute of Technology, Cambridge, MA, 02139, USA.}
\affil[1]{Systems and Control Engineering,  Indian Institute of Technology Bombay, Mumbai, 400076, India. \\ Emails: \texttt{croshank@mit.edu, kulkarni.ankur@iitb.ac.in}}
\date{}

\begin{document}
\maketitle

\begin{abstract}
    This document contains supporting material for our paper ``The Quantum Advantage in Binary Teams and the Coordination Dilemma'' \cite{deshpande2022binmatcombined}.
\end{abstract}

\section{Introduction}
This document contains supporting material for the paper ``The Quantum Advantage in Binary Teams and the Coordination Dilemma'' \cite{deshpande2022binmatcombined}. The supplementary material is structured into sections which develop the proofs of results from the main paper. In some cases, the proofs themselves require further supporting results, which are also derived herein. All equations and environments in the supplementary material are numbered as `S.xxx', to distinguish them from the ones from the main paper. For example equation (11)  refers to  equation number 11 the from main paper, and (S.11) refers to the one in the supplementary material.

\textit{A note on reading the supplementary material:} The supplementary material is not a standalone document and must be seen as an appendix to the main paper. The reader is advised to follow the main document for notations and other developments, which are not all recalled here, and refer to the supplementary material only for skipped proofs. These proofs can be read off directly from the relevant section of the supplementary material in continuity with the main document.
\section{Proof of Results in Section \ref{sec:equivalences}}\label{app:equiproofs}
\subsection{Proof of Proposition \ref{prop:transpose}}
We prove (1) $\implies$ (2) for arbitrary $M,N$; replacing $M,N$ with $M\t,N\t$ the reverse implication will follow.
Suppose $\Cscr(M,N)$ admits a quantum advantage and consider a problem instance $D:=(M, N, \Pbb, \Uscr_A,\Uscr_B, \chi)$ with cost function $ \ell $
and a quantum strategy $Q$ $:=(\rho_{AB}$,$\{ P_{u_A}^{(A)}(\xi_A)\}_{u_A,\xi_A}$, $\{ {P}^{(B)}_{u_B}(\xi_B)\}_{u_B, \xi_B})$.

Now consider an instance  $D'$ in the class $\Cscr(M\t, N\t)$ given by  $(M\t, N\t, \Pbb', \Uscr_A',\Uscr_B', \chi')$ where $\chi'  = \chi,
\Uscr_A'  = \Uscr_B,
\Uscr_B'  = \Uscr_A $ and
$\Pbb'(\xi_A, \xi_B, \xi_W)  \equiv \Pbb(\xi_B, \xi_A, \xi_W). $
It follows that the cost $\ell'$ of $D'$ satisfies $\ell'(u_A',u_B',\xi_W) \equiv \ell(u_B',u_A',\xi_W).$
Consider a strategy $$Q':=(\rho_{AB}\t, \{{P}'^{(A)}_{u_A'}(\xi_A)\}_{u_A', \xi_A}, \{{P}'^{(B)}_{u_B'}(\xi_B)\}_{u_B', \xi_B} )$$
where ${P}'^{(A)}_{u_A'}(\xi_A)=P^{(B)}_{u_A'}(\xi_A),\quad  \forall u_A' \in \Uscr_A', \xi_A \in \Xi_A,
{P}'^{(B)}_{u_B'}(\xi_B)=P^{(A)}_{u_B'}(\xi_B),\quad \forall u_B' \in \Uscr_B', \xi_B \in \Xi_B$. Now by properties of the trace, for all $\xi_A \in \Xi_A$ ,$\xi_B \in \Xi_B$, $u_A' \in \Uscr_A'$, $u_B'\in \Uscr_B'$,
$$\Tr\rho_{AB}\t {P}'^{(A)}_{u_A'}(\xi_A){P}'^{(B)}_{u_B'}(\xi_B) = \Tr\rho_{AB} {P}^{(A)}_{u_B'}(\xi_B) {P}^{(B)}_{u_A'}(\xi_A).$$
Thus $Q'$ satisfies for all $u_A' \in \Uscr_A', u_B'\in \Uscr_B'$,
$Q'(u_A',u_B'|\xi_A,\xi_B) = Q(u_B',u_A'|\xi_B,\xi_A)$
and \eqref{eq:qcost} (from the main paper)

\begin{align*}
J(Q';D')
&=	\sum_{\xi_A, \xi_B, \xi_W}\Pbb'(\xi_A, \xi_B, \xi_W)\sum_{ u_A', u_B'} \ell'(u_A', u_B', \xi_W)\\
&\qquad \qquad \times\Tr\left(\rho_{AB}P_{u_B'}^{(A)}(\xi_B)P_{u_A'}^{(B)}(\xi_A)\right) \\
&= \sum_{\xi_A, \xi_B, \xi_W}\Pbb(\xi_B, \xi_A, \xi_W)\sum_{ u_A', u_B'} \ell(u_B', u_A', \xi_W)\\
& \qquad\times  \Tr\left(\rho_{AB}P_{u_B'}^{(A)}(\xi_B)P_{u_A'}^{(B)}(\xi_A)\right) = J(Q;D),
\end{align*}
where we have used that $u_A'\in \Uscr_A'=\Uscr_B$ and $u_B'\in \Uscr_B'=\Uscr_A.$
Similarly for any deterministic strategy $\gamma\in\Gamma$ consider a $\gamma'\in\Gamma'$ for the instance $D'$ such that $\gamma'_A(\xi_A)=\gamma_B(\xi_A)$ and $\gamma'_B(\xi_B)=\gamma_A(\xi_A)$  so that
$J(\pi_{\gamma'};D')=J(\pi_{\gamma};D),$
and hence $J^*_\Lscr(D)=J^*_\Lscr(D').$
Thus if $\exists Q\in \Qscr $ such that $  J(Q; D)<J^*_\Lscr(D)$ then with $Q',D'$ as above, we have $J(Q'; D')< J^*_\Lscr( D'),$ whereby $\Cscr(M\t,N\t)$ admits a quantum advantage.
We have established the proposition.

\subsection{Proof of Proposition \ref{prop:permute}}
($(1)\Leftrightarrow (3)$) We will show that $(3)$ implies $(1)$. But since $\Rsf^2=\Ibf$, applying the same to result with $M,N$ replaced by $\Rsf M,\Rsf N$, we can conclude that $(1) \implies (3)$.
 For $D= (M, N, \Pbb, \Uscr_A, \Uscr_B, \chi) \in \Cscr(M, N)$, consider a $D'=(\mathsf{R}M, \mathsf{R}N, \Pbb, \Uscr_A', \Uscr_B, \chi) \in \Cscr(\mathsf{R}M, \mathsf{R}N)$ and let its cost function be denoted $\ell'$. We think of $\Uscr_A $ as a column vector $(u_A^0,u_A^1)$ and let $\Uscr_A' = \Rsf \Uscr_A$.
Then for any strategy $Q \in \Pscr(\Uscr|\Xi)$, we have, $J(Q; D')$
\begin{align*}
 &=\sum_{\xi, u_B}\Pbb(\xi)\sum_{u_A'\in \Uscr_A'}\ell'(u_A^{\prime}, u_B, \xi_W)Q(u_A^{\prime}, u_B|\xi_A, \xi_B)\\
&=\sum_{\xi, u_B}\Pbb(\xi)\sum_{u_A \in \Uscr_A}\ell(u_A, u_B, \xi_W)Q(u_A, u_B|\xi_A, \xi_B)
\end{align*}
$=J(Q; D)$ where $\xi = (\xi_A,\xi_B,\xi_W).$ It is easy to also see that $J_\Pi^*(D)=J_\Pi^*(D').$ Thus,
$J(Q, D)-J_\Pi^*(D)<0\implies J(Q, D')-J_\Pi^*(D)<0$
and we have shown that $(3) \implies (1)$, and thus $(1) \Leftrightarrow (3).$ The equivalence between $(2)$ and $(3)$ can be shown in a similar manner, thereby completing the proof.

\subsection{Proof of Proposition \ref{prop:exchange}}
It suffices to show $(1) \implies (2)$ for arbitrary $M,N$, following which the reverse claim will follow.
Let $D:=(M, N, \Pbb, \Uscr_A, \Uscr_B, \chi)$ and $Q$ $:=(\rho_{AB}$,$\{ P_{u_A}^{(A)}(\xi_A)\}_{u_A,\xi_A}$,$\{ {P}^{(B)}_{u_B}(\xi_B)\}_{u_B, \xi_B})\in \Qscr$.
Consider a problem $D':=(N, M, \Pbb', \Uscr_A, \Uscr_B,1/\chi) \in \Cscr(N,M)$ where $\Pbb'(\xi_A, \xi_B, 0)=\Pbb(\xi_A, \xi_B, 1), \quad \Pbb'(\xi_A, \xi_B, 1)=\Pbb(\xi_A, \xi_B, 0)$.\\
Notice that the cost function $\ell'(.)$ of the instance $D'$ is related to $\ell(.)$ of $D$ as  $\ell'(\cdot, \cdot, 0)\equiv (1/\chi)\ell(\cdot, \cdot, 1),\quad
\ell'(\cdot,\cdot, 1)\equiv (1/\chi)\ell(\cdot,\cdot, 0)$
Thus, denoting $\xi=(\xi_A,\xi_B,\xi_W),$ we have for any $Q\in \Pscr(\Uscr|\Xi)$
\begin{align}
J(Q;D')&=\sum_{u, \xi}\Pbb'(\xi)\ell'(u_A, u_B, \xi_W)Q(u_A, u_B|\xi_A, \xi_B)\nonumber\\
&=(1/\chi)\sum_{u, \xi} \Pbb(\xi)\ell(u_A, u_B, \xi_W)Q(u_A, u_B|\xi_A, \xi_B)\nonumber\\
&=(1/\chi)J(Q;D) \nonumber
\end{align}
In particular taking $Q \in \Pi$, we get that $J^*_\Lscr(D) = \frac{1}{\chi}J^*_\Lscr(D')$. Thus if $\exists Q\in \Qscr$ such that  $J(Q;D)-J^*_\Lscr(D)<0$, then for this $Q$, we have
$J(Q; D')-J^*_\Lscr(D')=\frac{1}{\chi}(J(Q;D)-J^*_\Lscr(D))<0.$
This establishes the proposition.

\section{Supporting Results for Proof of Theorem \ref{thm:main}}\label{sec:eliminate_suppl}
Theorem \ref{thm:main} claims that a problem class admits a quantum advantage if and only if it lies in the orbit of the CAC class or the $\half$-CAC class. In Section \ref{sec:eliminate} of the paper we showed that the CAC class and $\half$-CAC class admits a quantum advantage.
In the sections below we systematically eliminate all classes not in the orbit of the CAC and $\half$-CAC class to show Theorem~\ref{thm:main}.
\def\Crsfs{\mathscr{C}}
Additionally, the derivation of the optimal deterministic policy for the $\half$-CAC instance was skipped. We include it below.

\begin{definition}
	We call a problem class $\Cscr(M,N)\in\Crsfs$ an $m$-$n$ class if the number of non-zero entries in $M$ is $m\in\{0,1,2,3,4\}$ and the number of non-zero entries in $N$ is $n\in\{0,1,2,3,4\}$ and call $V=(M, N)$ an $m$-$n$ tuple. Let $\Cscr_{mn} \subset \Crsfs$ denote the set of all $m$-$n$ problem classes.
\end{definition}
Notice that $|\Cscr_{mn}|=\prescript{4}{}C_m \prescript{4}{}C_n$ and $\{\Cscr_{mn}\}_{m,n}$ defines a partition on $\Crsfs$. For $(M, N)$ in the CAC form, $\Cscr((M, N);\Omega)\subset \Cscr_{22}$. Similarly, for $(M, N)$ in the $\half$-CAC form, $\Cscr((M, N);\Omega)\subset \Cscr_{12}\cup \Cscr_{21}$. To proceed with our systematic elimination, we eliminate $\Cscr_{mn}$ for all pairs $(m, n)\notin \{(2,2), (1,2), (2,1)\}$ through a pigeonhole principle-based argument; this is done in Section \ref{sec:cent}. In the subsequent sections, we eliminate classes in $\Cscr_{22}$, $\Cscr_{12}$ and $\Cscr_{21}$ that do not belong in the orbit of the CAC or the $\half$-CAC class.

\subsection{Problem classes with no centralisation advantage} \label{sec:cent}
Since $\ell$ takes only binary values, if there exists a pair of actions $u_A^*, u_B^*$ such that $\ell(u_A^*, u_B^*,\xi_W)$ is nonzero for both values of $\xi_W$, then the strategy
\begin{equation}\label{eq:ccq}
	\bar{Q}(u_A,u_B|\xi_A,\xi_B) \equiv  \delta(u_A=u_A^*,u_B=u_B^*)
\end{equation}
 which lies in $\Pi$ is optimal over $\Pscr(\Uscr|\Xi)$. In other words, the problem admits no centralisation advantage. The following definition and the proposition that succeeds formalises this line of arguments.

\begin{definition}
We call a pair $V=(M, N)$ overlapping if $\exists i, j $ such that $ [M]_{ij}=[N]_{ij}=-1$. Denote the set of all classes $\Cscr(V)$ where $V$ is overlapping by $\Cscr^o$,
\begin{equation}\label{eq:codef}
	\Cscr^o:=\{\Cscr(M, N)|\exists i, j: [M]_{ij}=[N]_{ij}=-1\}.
\end{equation}
\end{definition}
\begin{proposition}\label{prop:centralproblems}
If $\Cscr(M,N)\in\Cscr^o$, then $\Cscr(M,N)$ does not admit a centralisation, and hence a quantum advantage.
\end{proposition}
\begin{proof}
Let $\Cscr(M,N) \in \Cscr^o$ and let $D \in \Cscr(M,N).$
By \eqref{eq:codef}, there exists $u_A^* \in \Uscr_A,u^*_B \in \Uscr_B$ such that
\begin{equation}
	\ell({u}^{*}_A, {u}^{*}_B, \xi_W)\leq \ell(u_A, u_B, \xi_W) \quad \forall u_A,u_B,\xi_W. \label{eq:generalcentral}
\end{equation}
Thus for any $Q\in\Pscr(\Uscr|\Xi)$,
\begin{align*}
&J(Q; D)\\
&\geq \sum_{\xi_{W}} \Pbb(\xi_W|\xi_A, \xi_B)\sum_{\xi_A, \xi_B}\Pbb(\xi_A, \xi_B)\min_{u_A, u_B}\ell(u_A, u_B, \xi_W)\\
&= J(\bar{Q};D),
\end{align*}
where $\bar{Q}$ is as defined in \eqref{eq:ccq}. Since $\bar{Q}\in \Pi,$
we get $J^{**}(D)=J_\Lscr^*(D)$ and the proposition is established.
\end{proof}

Although \eqref{eq:codef} gives a tractable definition of $\Cscr^o$, it is not straightforward to exhaustively enumerate subclasses in $\Cscr^o.$ Hence we will use Proposition~\ref{prop:centralproblems} as an enabling lemma to eliminate some subclasses $\Cscr_{mn} \in \Crsfs.$
Following are two results that accomplish this.
\begin{corollary}
	Let either $m=0$ or $n=0$ and let $\Cscr(M,N) \in \Cscr_{mn}$. Then $\Cscr(M, N)$ does not admit a quantum advantage.\label{cor:nonull}
\end{corollary}
\begin{proof}
	If one of the matrices $M$ and $N$ is null, then there exist $u_A^*,u_B^*$ such that
 \eqref{eq:generalcentral} holds. The rest follows as in Proposition \ref{prop:centralproblems}.
\end{proof}
\begin{corollary}
Let $\Cscr(M,N)$ be such that $m+n\geq 5$. Then $\Cscr(M, N)$ does not admit quantum advantage.\label{cor:pigeonhole}
\end{corollary}
\begin{proof}
If $m+n\geq 5$, then by the pigeonhole principle, $\Cscr(M, N)\in \Cscr^o$.
\end{proof}

\subsection{Elimination of other problem classes}
For $i\in\{1, 2\}$, let $-i$ denote the element in $\{1, 2\}\setminus\{i\}$.
\begin{definition}
	We call $V = (M,N)$ and the class $\Cscr(V)$ achiral if $V$ is non-overlapping and $\exists i, j $ such that $ [M]_{ij}=[N]_{-i-j}=-1$. We call a $V$ and the class $\Cscr(V)$ chiral if $V$ is non-overlapping and not achiral.
\end{definition}
The following lemma shows that the properties of  $V$ being overlapping and V being achiral extend to its orbit.
\begin{lemma}\label{lem:nooverlapgen}
	(1) $V$ is overlapping if and only if $V'$ is overlapping for all $V'\in(V;\Omega)$.
	(2) $V$ is achiral if and only if $V'$ is achiral for all $V'\in (V;\Omega)$.
\end{lemma}
\begin{proofarg}[of Lemma \ref{lem:nooverlapgen}]\label{app:nooverlapgen}
(1) It is easy to see that by inspection all actions in $\Omega$ map an overlapping pair $(M,N)$ to another overlapping pair. Moreover, since $M,N$ are $2\times 2$ matrices, all actions $\Rsf,\Rsf',\Tsf,\Esf$ are involutions, \ie, when applied twice, are equivalent to $\bfI$. In other words if $V' \in (V;\Omega)$, then by a suitable application actions, one can map $V'$ to back to $V$, whereby if $V'$ is overlapping, then so must be $V.$ \\
(2) This part follows in a similar manner as (1).
Suppose that $V$ is achiral. Then owing to part (a), every $V'$ in the orbit $(V;\Omega)$ is non-overlapping. We will show that the action $\Rsf$ preserves achirality of $V$; this can be shown for other actions can be proved similarly.
Let $i, j$ be such that $[M]_{ij}=[N]_{-i-j}=-1$. Then,  $[\Rsf M]_{-ij}=[M]_{ij}=[N]_{-i-j}=[\Rsf N]_{i-j}=-1$, implying that $\Rsf V$ is achiral. Thus, the orbit $(V;\Omega)$ is achiral. Again, using that the actions in $\Omega$ are involutions we get that if any $V'\in (V,\Omega)$ is achiral, then so is $V.$
\end{proofarg}

Corollaries \ref{cor:nonull} and \ref{cor:pigeonhole} help eliminate the possibility of a quantum advantage for all $m$-$n$ classes where $m+n\geq 5$ or $\min(m, n)=0$. Thus, out of the $256$ classes in $\Crsfs$, we have eliminated
$\sum_{m+n\geq 5} {^4C_m^4C_n}+\sum_{\min(m, n)=0} {^4C_m^4C_n}=93+31=124$
classes. We now scan through remaining elements in $\Crsfs$, namely,  $\Cscr_{11}, \Cscr_{12}, \Cscr_{21}, \Cscr_{22},\Cscr_{13},\Cscr_{31}$.

Our elimination procedure for $\Cscr_{mn}$ can be described as follows. We define $\Cscr_{mn}^o=\Cscr_{mn}\cap \Cscr^o$ as the collection of all overlapping $m$-$n$ classes. Observe that
\begin{equation}\label{eq:overlap}
	|\Cscr_{mn}^o| = \comb{4}{m} \times (\sum_{k=1}^{n}\comb{m}{k}\comb{4-m}{n-k} ),
\end{equation}
since we have $\comb{4}{m}$ choices for a `$-1$' in $M$ following which we have $\comb{m}{k}\comb{4-m}{n-k}$ choices for $k$ overlapping $-1$'s in $N$ and $\comb{4-m}{n-k}$ for the remaining $(n-k)$ nonoverlapping $-1$'s.
We then explicitly specify a chiral $V_c=(M_c, N_c)$ and an achiral $V_a=(M_a, N_a)$ and define $\Cscr_{mn}^c:=\Cscr((V_c;\Omega))\cap \Cscr_{mn}$, $\Cscr_{mn}^a:=\Cscr((V_a;\Omega))\cap\Cscr_{mn}$. Following Lemma \ref{lem:nooverlapgen}, such a specification ensures that all classes in $\Cscr_{mn}^c$ are chiral and all those in $\Cscr_{mn}^a$ are achiral so that $\Cscr_{mn}^o,\Cscr_{mn}^a,\Cscr_{mn}^c$ are mutually disjoint.
We then establish that our choice of $V_c$, $V_a$ ensures that $\Cscr^o_{mn}, \Cscr^a_{mn}$ and $\Cscr^c_{mn}$ exhaust $\Cscr_{mn},$ whereby these constitute a partition of $\Cscr_{mn}$.
We then examine $\Cscr(V_c)$ and $\Cscr(V_a)$ and eliminate those that do not admit a quantum advantage.

\subsubsection{Elimination of 1-1 problem class $\Cscr_{11}$}
Consider $\Cscr_{11}$ and notice $|\Cscr_{11}|=\prescript{4}{}C_1\prescript{4}{}C_1=16$. Define
$\Cscr_{11}^o=\Cscr_{11}\cap\Cscr^o.$
Define the following achiral pair $V_a=(M_a, N_a)$,
\begin{equation}\label{eq:11ma}
 M_a:=\begin{pmatrix}
-1&0\\
0&0
\end{pmatrix},\quad N_a:=\begin{pmatrix}
0&0\\
0&-1
\end{pmatrix},
\end{equation}
and let $\Cscr_{11}^a:=  \Cscr((V_a;\Omega))\cap \Cscr_{11}$. Observe that,
\begin{equation}\label{eq:c11adef}
\Cscr_{11}^a=\{\Cscr(V_a), \Cscr(\Rsf V_a),
 \Cscr(V_a\Rsf), \Cscr(\Rsf V_a\Rsf)\},
\end{equation}
so that $|\Cscr_{11}^a|=4$. Now take the chiral pair $V_c=(M_c, N_c),$
\begin{equation}\label{eq:11mc}
M_c:=\begin{pmatrix}
-1&0\\
0&0
\end{pmatrix}, N_c:=\begin{pmatrix}
0&-1\\
0&0
\end{pmatrix},
\end{equation}
and let $\Cscr_{11}^c := \Cscr((V^c;\Omega))\cap \Cscr_{11}$. It is easy to verify that
\begin{align}\label{eq:c11cdef}
\Cscr_{11}^c=&\{\Cscr(V_c), \Cscr(\Rsf V_c), \Cscr(V_c\Rsf), \Cscr(\Rsf V_c\Rsf), \Cscr(\Tsf V_c),\nonumber\\
&\Cscr(\Rsf \Tsf V_c), \Cscr(\Tsf V_c\Rsf), \Cscr(\Rsf \Tsf V_c\Rsf)\},
\end{align}
whereby $|\Cscr_{11}^c|=8 = |\Cscr_{11}|-|\Cscr_{11}^o|-|\Cscr_{11}^a|.$ Thus $\Cscr_{11}^c,\Cscr_{11}^o,\Cscr_{11}^a$ is a partition of $\Cscr_{11}.$
The following proposition eliminates $\Cscr_{11}$ by elimination of each of the elements in this partition.
\begin{proposition} \label{prop:11eliminate} $\Cscr_{11}$ does not admit a quantum advantage.
\end{proposition}
\begin{proof}
$\Cscr_{11}^o$ does not admit a quantum advantage
since $\Cscr_{11}^o\subset \Cscr^o$.
We now show the same for $\Cscr_{11}^a.$ For an instance $D\in\Cscr(M_a, N_a)$, for $(M_a,N_a)$ as defined in \eqref{eq:11ma}, note that  $\ell(u_A^0,u_B^0,0)=-1$, $\ell(u_A^1, u_B^1, 1)=-\chi$ and $\ell(.)\equiv 0$ otherwise. Now consider deterministic policies $\hat{\gamma},\overline{\gamma}:$ $ \hat{\gamma}_A(\xi_A)\equiv u_A^0, \hat{\gamma}_B(\xi_B)\equiv u_B^0 \aur \overline{\gamma}_A(\xi_A)\equiv u_A^1,\overline{\gamma}_B(\xi_B)\equiv u_B^1.$ following which it is easy to evaluate $ J(\pi_{\hat{\gamma}};D) = - \Pbb(\xi_w=0), \quad J(\pi_{\overline{\gamma}};D)=-\chi \Pbb(\xi_W=1) $.
Now consider for a no-signalling vertex $Q^{\alpha\beta\delta}\in\Nscr\Sscr$ (recall \eqref{eq:nonlocalvertex} from the main paper),
\begin{align*}
    & J(Q^{\alpha\beta\delta}; D)
    =-\sum_{\xi_A, \xi_B}(\Pbb(\xi_A,\xi_B,0)Q^{\alpha\beta\delta}(u_A^0,u_B^0|\xi_A,\xi_B)\\
    &\qquad\qquad +\chi\Pbb(\xi_A,\xi_B,1)Q^{\alpha\beta\delta}(u_A^0,u_B^1|\xi_A,\xi_B))\\
    &=-\frac{1}{2}\sum_{\xi_A, \xi_B}(\Pbb(\xi_A, \xi_B,0)(\sim\xi_A\cdot\xi_B\oplus\alpha\cdot\xi_A\oplus\beta\cdot\xi_B\oplus\delta)
    \\
    &\qquad\qquad+\chi\Pbb(\xi_A, \xi_B, 1)(\sim\xi_A\cdot\xi_B\oplus\alpha\cdot\xi_A\oplus\beta\cdot\xi_B\oplus\delta))\\
    &\geq \frac{1}{2}\left(J(\pi_{\hat{\gamma}};D)+J(\pi_{\overline{\gamma}};D)\right),
\end{align*}
where in the last step we have used that the terms multiplying the probabilities are nonnegative. Since the RHS is independent of the no-signalling vertex, the cost of every no-signalling vertex is bounder below by the cost of a deterministic policy
$J(Q^{\alpha\beta\delta}; D)\geq \min(J(\pi_{\hat{\gamma}};D), J(\pi_{\overline{\gamma}};D)). $
Since the instance $D$ was arbitrary, this establishes that $\Cscr_{11}^a$ does not admit a no-signalling and hence quantum advantage.

We follow a similar line of arguments for $\Cscr_{11}^c.$ For an instance $D\in\Cscr(M_c, N_c)$, we have $\ell(u_A^0,u_B^0,0)=-1$, $\ell(u_A^0, u_B^1, 1)=-\chi$ and $\ell(.)\equiv 0$ otherwise. Now consider deterministic policies $\hat{\gamma}$: $\hat{\gamma}_A(\xi_A)\equiv u_A^0, \hat{\gamma}_B(\xi_B)\equiv u_B^0$ and $\overline{\gamma}$: $\overline{\gamma}_A(\xi_A)\equiv u_A^0,\overline{\gamma}_B(\xi_B)\equiv u_B^1$
It is straightforward to evaluate $J(\pi_{\hat{\gamma}};D)=-\Pbb(\xi_{W}=0)\quad
J(\pi_{\overline{\gamma}};D)=-\chi \Pbb(\xi_w=1).$

Now for any no-signalling vertex $Q^{\alpha\beta\delta}\in\Nscr\Sscr$, we again have, $J(Q^{\alpha\beta\delta}; D)\geq \half(J(\pi_{\hat{\gamma}};D)+J(\pi_{\overline{\gamma}};D))$,
whereby $J^*_{\Nscr\Sscr}(D)=J^*_\Lscr(D),$ and that $\Cscr_{11}^c$ does not admit a quantum advantage.
This establishes the proposition.
\end{proof}

\subsubsection{Elimination of 1-3 and 3-1 problem classes $\Cscr_{13}$ and $\Cscr_{31}$}
Now consider the set of $1$-$3$ class $\Cscr_{13}$. We argue that it does not admit a quantum advantage. Since $\Cscr_{31}$ is can generated by from $\Cscr_{13}$ by action $\Esf$, thanks to Proposition~\ref{prop:exchange} we need not discuss $\Cscr_{31}$ separately.
 Again, define
$\Cscr_{13}^o=\Cscr_{13}\cap\Cscr^o$,
define the achiral pair $V_a=(M_a, N_a)$ as below,
\begin{equation}\label{eq:13ma}
M_a:=\begin{pmatrix}
-1&0\\
0&0
\end{pmatrix}, N_a:=\begin{pmatrix}
0&-1\\
-1&-1
\end{pmatrix},
\end{equation}
and let, $\Cscr_{13}^a:=\Cscr( (V^a, \Omega))\cap \Cscr_{13}$ and note that
\begin{equation}\label{eq:13ca}
\Cscr_{13}^a=\{\Cscr(V_a), \Cscr(\Rsf V_a), \Cscr(V_a\Rsf), \Cscr(\Rsf V_a\Rsf)\}.
\end{equation}
Notice that we do not have a chiral class in $\Cscr_{13}$, since in such a class the matrix $N$ would have $3$ entries as $-1$, none of which overlap with the $-1$ entry in $M$, and must not be of the form in \eqref{eq:13ma}.
Notice that $|\Cscr_{13}|=\prescript{4}{}C_1\prescript{4}{}C_3=16$, $|\Cscr^o_{13}|=12$ from \eqref{eq:overlap}, and   $|\Cscr_{13}^a|=4=|\Cscr_{13}|-|\Cscr_{13}^o|$ so that $\Cscr^a_{13}$ and $\Cscr_{13}^o$ indeed themselves partition the set of all $1$-$3$ classes.
\begin{proposition}
$\Cscr_{13}$ and $\Cscr_{31}$ do not admit a quantum advantage.
\label{prop:eliminatecomp}.
\end{proposition}
\begin{proofarg}
   [of Proposition \ref{prop:eliminatecomp}]\label{app:eliminatecomp}
$\Cscr^o_{13}$ does not admit quantum advantage since $\Cscr_{13}^o\subset \Cscr^o$. Now to show the same for $\Cscr^a_{13}$, consider an instance $D\in \Cscr(M_a, N_a)$ for $V_a=(M_a, N_a)$ as in \eqref{eq:13ma}. Recall from \eqref{eq:localvertex} (from the main paper) that $\pi^{abcd}$ denotes a local deterministic strategy for all binary $a, b, c, d$.  For any no-signalling vertex $Q^{\alpha\beta\delta}$, we claim that
\begin{equation}\label{eq:13nosigdec}
    J(Q^{\alpha\beta\delta}; D)=\frac{1}{2}(J(\pi^{xyzw}; D)+J(\pi^{11ab};D))
\end{equation}
where $\pi^{xyzw}$ and $\pi^{11ab}$, as defined in \eqref{eq:localvertex} (from the main paper), are vertices of the local polytope specified by the Boolean variables $x,y,z,w,a,b\in\{0, 1\}$, which in turn are given  in terms of $\alpha, \beta$ and $\delta$ as,
\begin{align}\label{eq:x}
    x&=(\sim\beta\cdot\sim\delta)\vee(\beta\cdot\alpha\cdot\delta), \\
    y&=\sim\alpha\cdot\sim\delta\cdot\beta,\  z=\delta\vee(\sim\delta\cdot\alpha\cdot\beta)\label{eq:z}\\
    w&=(\alpha\cdot(\beta\oplus\delta))\vee(\sim\alpha\cdot\delta)\label{eq:w}\\
    a&=\sim\beta\vee(\beta\cdot\sim\alpha\cdot\sim\delta)\label{eq:a}\\
    b&=(\sim\alpha\cdot(\beta\vee\delta))\vee(\alpha\cdot(\sim\beta\oplus\delta))\label{eq:b}.
\end{align}
To establish this  claim, notice for $D\in\Cscr(V_a)$, and any $Q\in \Pscr(\Uscr|\Xi)$,
\begin{align}
    J(Q; D)&=-\sum_{\xi_A, \xi_b}\left (\Pbb(\xi_A, \xi_B, 0)Q(u_A^0,u_B^0|\xi_A, \xi_B)\nonumber\right.\\
    &\left.-\chi\Pbb(\xi_A, \xi_B, 1)(1-Q(u_A^0,u_B^0|\xi_A, \xi_B))\right ).\label{eq:1-3nscost}
\end{align}
To establish \eqref{eq:13nosigdec},  we  show that for all $\alpha, \beta, \delta, \xi_A, \xi_B \in\{0, 1\}$, and with $x,y,z,w,a,b$ as specified by \eqref{eq:x}-\eqref{eq:b},
\begin{align}\label{eq:pyclaim1}
    &Q^{\alpha\beta\delta}(u_A^0, u_B^0|\xi_A, \xi_B)\nonumber\\
    &=\frac{1}{2}(\pi^{xyzw}(u_A^0, u_B^0|\xi_A, \xi_B)+\pi^{11ab}(u_A^0, u_B^0|\xi_A, \xi_B)),
\end{align}
so that \eqref{eq:13nosigdec} now follows from \eqref{eq:pyclaim1}.
The validity of \eqref{eq:pyclaim1} can be verified through straightforward computation; due to the large number of variables involved, we relegate this to a Python notebook~\cite{c13pylink}. This establishes our claim
\eqref{eq:13nosigdec}.
We have thus established that for each no-signalling $Q^{\alpha\beta\delta}$ policy, there exists a policy in $\pi \in \Lscr$ such that $J(Q^{\alpha\beta\delta};D)\geq J(\pi;D),$ whereby establishing that $J^*_{\Nscr\Sscr}(D) \geq J^*_\Lscr(D).$ Since $D$ is arbitrary, there is no quantum advantage in $\Cscr_{13},$ and from Proposition~\ref{prop:exchange}, none in $\Cscr_{31}.$
\end{proofarg}

\subsection{$\Cscr_{12}$, $\Cscr_{21}$ and the \textit{$\half$-CAC} problem class}
We now come to the $1$-$2$ class
$\Cscr_{12}$; we will quickly address $\Cscr_{21}$ at the end of this subsection. Define $\Cscr_{12}^o=\Cscr_{12}\cap\Cscr^o,$
the achiral pair $V_a=(M_a, N_a)$,
\begin{equation}\label{eq:12ma}
 M_a:=\begin{pmatrix}
-1&0\\
0&0
\end{pmatrix}, N_a:=\begin{pmatrix}
0&-1\\
0&-1
\end{pmatrix},
\end{equation}
and the chiral pair $V_c=(M_c, N_c)$
\begin{equation}\label{eq:12mc}
	M_c:=\begin{pmatrix}
		-1&0\\
		0&0
	\end{pmatrix}, N_c:=\begin{pmatrix}
		0&-1\\
		-1&0
	\end{pmatrix}.
\end{equation}
Note that the chiral pair is $\half$-CAC form. Let $\Cscr_{12}^a:=\Cscr( (V^a, \Omega))\cap \Cscr_{12}$ and $\Cscr_{12}^c:=\Cscr( (V_c, \Omega))\cap \Cscr_{12}$.
Note that
\begin{align}\label{eq:c12adef}
 \Cscr_{12}^a=&\{\Cscr(V_a), \Cscr(\Rsf V_a),
 \Cscr(V_a\Rsf), \Cscr(\Rsf V_a\Rsf), \Cscr(\Tsf V_a), \nonumber\\
 &\quad \Cscr(\Rsf\Tsf V_a),
 \Cscr(\Tsf V_a\Rsf), \Cscr(\Rsf \Tsf V_a\Rsf)\}, \\
\Cscr_{12}^c =&\{\Cscr(V_c), \Cscr(\Rsf V_c), \Cscr(V_c\Rsf), \Cscr(\Rsf V_c\Rsf)\}.  \label{eq:c12cdef}
\end{align}
Further, notice that $|\Cscr_{12}|=\prescript{4}{}C_1\prescript{4}{}C_2=24$,
 $|\Cscr_{12}^o|+|\Cscr_{12}^a|+|\Cscr_{12}^c|=\Cscr_{12}=24$ so $\Cscr_{12}^o$, $\Cscr_{12}^a$ and $\Cscr_{12}^c$ partition the set $\Cscr_{12}$.
We eliminate all 1-2 classes not in the orbit of the $\half$-CAC class (\ie, not in the orbit of the chiral pair $(M_c,N_c)$) in the following proposition.
\begin{proposition}\label{prop:1-2Celiminate}
1) $\Cscr_{12}^o$ does not admit a quantum advantage.
2) $\Cscr_{12}^a$ does not admit quantum advantage.
\end{proposition}
\begin{proof}
1) Immediate from $\Cscr_{12}^o\subset \Cscr^o$.\\
2) For an instance $D\in\Cscr(V_a)$, Consider two deterministic policies $\hat\gamma$ and $\overline \gamma$, and the corresponding costs:
\begin{align}
    \hat{\gamma}_A(\xi_A)&\equiv u_A^0, \hat{\gamma}_B(\xi_B)\equiv u_B^0; J(\pi_{\hat{\gamma}};D)=-\Pbb(\xi_W=0)\nonumber \\
    \overline{\gamma}_A(\xi_A)&\equiv 1,\overline{\gamma}_B(\xi_B)\equiv 1;J(\pi_{\overline{\gamma}};D)=-\chi \Pbb(\xi_W=1).\label{eq:elidet}
\end{align}
Now consider for a no-signalling vertex $Q^{\alpha\beta\delta}\in\Nscr\Sscr$ and recall \eqref{eq:nonlocalvertex} (from the main paper) to express
\begin{align*}
    &J(Q^{\alpha\beta\delta}; D)=-\sum_{\xi_A, \xi_B} \Pbb(\xi_A,\xi_B,0)Q^{\alpha\beta\delta}(u_A^0,u_B^0|\xi_A,\xi_B)+\\
    & \chi\Pbb(\xi_A,\xi_B,1)\sum_{u_A} Q^{\alpha\beta\delta}(u_A, u_B^1|\xi_A, \xi_B)\\
    &=\frac{-1}{2}\sum_{\xi_A, \xi_B}(\Pbb(\xi_A, \xi_B,0)(\sim\xi_A\cdot\xi_B\oplus\alpha\cdot\xi_A\oplus\beta\cdot\xi_B\oplus\delta)\\
    &\qquad\qquad+\chi\Pbb(\xi_A, \xi_B, 1)(\xi_A\cdot\xi_B\oplus\alpha\cdot\xi_A\oplus\beta\cdot\xi_B\oplus\delta\\
    &\qquad\qquad + \sim\xi_A\cdot\xi_B\oplus\alpha\cdot\xi_A\oplus\beta\cdot\xi_B\oplus\delta)),\\
    &\geq \frac{1}{2}(J(\pi_{\hat{\gamma}};D)+J(\pi_{\overline{\gamma}};D)).
\end{align*}
In the last inequality we have again used the nonnegativity of the terms multiplying the probabilities.
Thus, the cost of every no-signalling policy is bounded below by the cost of a deterministic policy in $\Lscr$. Arguing as in Proposition~\ref{prop:11eliminate}, we see that there is no no-signalling advantage and quantum advantage within in $\Cscr^a_{12}$. This establishes the proposition.
\end{proof}

Now notice that $\Cscr_{21}=\Esf\Cscr_{12}$. Thus define $\Cscr_{21}^o=\Esf\Cscr_{12}^o, \Cscr_{21}^a=\Esf\Cscr_{12}^a$ and $\Cscr_{21}^c=\Esf\Cscr_{12}^c$, and the 2-1 class partitions into $\Cscr_{21}^o$, $\Cscr_{21}^a$ and $\Cscr_{21}^c$.
$\Cscr_{21}^c$ here lies within the orbit of the $\half$-CAC class, and the elimination of the other two $\Cscr_{21}^o$ and $\Cscr_{21}^a$ follows from Proposition \ref{prop:1-2Celiminate} and Proposition~\ref{prop:11eliminate}. This subsection thus eliminates all 1-2 and 2-1 classes that do not lie in the orbit of $\half$-CAC class.

\subsection{$\Cscr_{22}$ and the CAC problem class}
Ultimately, we attend the set of 2-2 classes  $\Cscr_{22}$. Define
$\Cscr_{22}^o=\Cscr_{22}\cap\Cscr^o$,
the chiral pair $V_c=(M_c, N_c)$ and the achiral pair $V_a=(M_a, N_a)$
\begin{equation}\label{eq:22mc}
M_c:=\begin{pmatrix}
-1&0\\
0&-1
\end{pmatrix}, N_c:=\begin{pmatrix}
0&-1\\
-1&0
\end{pmatrix}.
\end{equation}
\begin{equation}\label{eq:22ma}
 M_a:=\begin{pmatrix}
-1&0\\
-1&0
\end{pmatrix}, N_a:=\begin{pmatrix}
0&-1\\
0&-1
\end{pmatrix}.
\end{equation}
The chiral pair is in the CAC form.
Define $\Cscr_{22}^c:= \Cscr((V^c, \Omega))\cap \Cscr_{22}$ and $ \Cscr_{22}^a:=\Cscr( (V^a, \Omega))\cap\Cscr_{22}$. Notice that,
\begin{align}
\Cscr_{22}^c&=\{\Cscr(V_c), \Cscr(\Rsf V_c)\},\label{eq:c22cdef}\\
\Cscr_{22}^a&=\{\Cscr(V_a),
 \Cscr(V_a\Rsf), \Cscr(\Tsf V_a), \Cscr(\Rsf \Tsf V_a)\}. \label{eq:c22adef}
\end{align}
It is easy to check by inspection, and using \eqref{eq:overlap}, $|\Cscr^o_{22}|+|\Cscr^a_{22}|+|\Cscr^c_{22}|=|\Cscr_{22}|$
so that $\Cscr^o_{22}$, $\Cscr^a_{22}$ and $\Cscr^c_{22}$ partition $\Cscr_{22}$. Consequently, the sets $\Cscr^o_{22}$ and $\Cscr^a_{22}$ capture all 2-2 classes which are outside the orbit of CAC, and we eliminate these sets in the following proposition.
\begin{proposition}\label{prop:2-2Celiminate}
1) $\Cscr_{22}^o$ does not admit a quantum advantage.
2) $\Cscr_{22}^a$ does not admit quantum advantage.
\end{proposition}
\begin{proofarg}
[of Proposition \ref{prop:2-2Celiminate}]\label{app:2-2Celiminate}
    1) Immediate since $\Cscr_{22}^o\subset \Cscr^o$.\\
    2) 
    Let $D$ be an instance in $\Cscr_{22}^a$ and let $\hat \gamma$ and $\overline\gamma$ be as defined in \eqref{eq:elidet}. Consider any no-signalling policy $Q^{\alpha\beta\delta}\in\Nscr\Sscr$ and notice,
    \begin{align*}
        &J(Q^{\alpha\beta\delta}; D)=-\sum_{\xi_A, \xi_B}(\Pbb(\xi_A,\xi_B,0)(Q^{\alpha\beta\delta}(u_A^0,u_B^0|\xi_A,\xi_B))\\
        &\qquad +\chi\Pbb(\xi_A,\xi_B,1)(\sum_{u_A}Q^{\alpha\beta\delta}(u_A^0,u_B^1|\xi_A,\xi_B)))
    \end{align*}
    Using \eqref{eq:nonlocalvertex} (from the main paper),
    \begin{align*}
        J(Q^{\alpha\beta\delta}; D)&=-\frac{1}{2}\sum_{\xi_A, \xi_B}(\Pbb(\xi_A, \xi_B,0)+\chi\Pbb(\xi_A, \xi_B,1))\times\\
        &(\sim\xi_A\cdot\xi_B\oplus\alpha\cdot\xi_A\oplus\beta\cdot\xi_B\oplus\delta \\
        &\qquad \qquad+ \xi_A\cdot\xi_B\oplus\alpha\cdot\xi_A\oplus\beta\cdot\xi_B\oplus\delta)\\
        &=\frac{1}{2}(J(\pi_{\hat{\gamma}};D)+J(\pi_{\overline{\gamma}};D)).
    \end{align*}
    Arguing as in Proposition~\ref{prop:11eliminate}, $J^*_{\Nscr\Sscr}(D)=J^*_\Lscr(D) $,
    and the no-signalling and quantum advantages are absent in $\Cscr_{22}^a$. This establishes the proposition.
\end{proofarg}

\subsection{Optimal deterministic policy for the $\half$-CAC instance in Section \ref{sec:1/2cacpoc}}
\begin{lemma}\label{lem:1/2cacdemodet}
    For the instance $D=(M_c, N_c, \Pbb, \Uscr_A, \Uscr_B, \chi)\in \half$-CAC with $\Pbb$ given by \eqref{eq:1/2cacprior} (from the main paper) and $\chi=2$, specified in Section \ref{sec:1/2cacpoc}, an optimal deterministic policy and the corresponding cost is given by
    $\gamma^*_A\equiv u_A^0, \gamma_B^*\equiv u_B^1, \quad J(\pi_{\gamma^*};D)=-6/5.$
\end{lemma}
\begin{proof}
    We show this by directly enumerating all sixteen policies and their cost in the table below.
    \begin{center}
        \begin{tabular}{c|c|c|c|c}
            $\gamma_A(0)$ & $\gamma_A(1)$ & $\gamma_B(0)$ & $\gamma_B(1)$ & $J(\pi_{\gamma}; D)$  \\
            \hline
            $u_A^0$&$u_A^0$&$u_B^0$&$u_B^0$& $-2/5$\\
            $u_A^1$&$u_A^0$&$u_B^0$&$u_B^0$& $-6/5$\\
            $u_A^0$&$u_A^1$&$u_B^0$&$u_B^0$& $-2/5$\\
            $u_A^1$&$u_A^1$&$u_B^0$&$u_B^0$& $-6/5$\\
            $u_A^0$&$u_A^0$&$u_B^1$&$u_B^0$& $-6/5$\\
            $u_A^1$&$u_A^0$&$u_B^1$&$u_B^0$& $-6/5$\\
            $u_A^0$&$u_A^1$&$u_B^1$&$u_B^0$& $-2/5$\\
            $u_A^1$&$u_A^1$&$u_B^1$&$u_B^0$& $-2/5$\\
            $u_A^0$&$u_A^0$&$u_B^0$&$u_B^1$& $-2/5$\\
            $u_A^0$&$u_A^1$&$u_B^0$&$u_B^1$& $-4/5$\\
            $u_A^1$&$u_A^0$&$u_B^0$&$u_B^1$& $-2/5$\\
            $u_A^1$&$u_A^1$&$u_B^0$&$u_B^1$& $-4/5$\\
            $\boxed{u_A^0}$&$\boxed{u_A^0}$&$\boxed{u_B^1}$&$\boxed{u_B^1}$& $\boxed{-6/5}$\\
            $u_A^0$&$u_A^1$&$u_B^1$&$u_B^1$& $-4/5$\\
            $u_A^1$&$u_A^0$&$u_B^1$&$u_B^1$& $-2/5$\\
            $u_A^1$&$u_A^1$&$u_B^1$&$u_B^1$& $0$\\
            \hline
        \end{tabular}
    \end{center}
    It is evident that the boxed policy is an optimal policy.
\end{proof}

\section{Proof of Theorem \ref{thm:qubitsenough}}\label{proof:qubitsenough}
Our proof borrows assistance from the following two results.
The former is a lemma in linear algebra due to Jordan, which has been proved in \cite{masanes2006asymptotic} (see lemma on page 2). The latter is an embedding result for quantum strategies.
\begin{lemma}\label{prop:jordanlemma}
    \textit{(Jordan's lemma)}
    Let $P^0_0$, $P^0_1$, $P^1_0$ and $P^1_1$ be projection operators on an $n$ dimensional Hilbert space $\Hscr$ such that
    $P^0_0+P^0_1=P^1_0+P^1_1=\Ibf.$
    Then $P^0_0$, $P^0_1$, $P^1_0$ and $P^1_1$ are simultaneously block diagonalisable with each block of size at most two.
    \ie, there exists an $n_0<n$ and an orthonormal basis
    $\{\ket{w_j}, \ket{w_{-j}}\}_{j=1}^{n_0}\cup \{\ket{w_j}\}_{j=2n_0+1}^n$ such that for each $S\in\{P^0_0, P^0_1, P^1_0 ,P^1_1\}$, there exist coefficients $\{s_{qr}\}$ such that
    \begin{equation*}
        S=  \sum_{j=1}^{n_0}\sum_{q, r\in \{-j, j\}} s_{qr}\ket{w_q}\bra{w_r}+ \sum_{j=2n_0+1}^n s_{jj}\ket{w_j}\bra{w_j}.
    \end{equation*}
\end{lemma}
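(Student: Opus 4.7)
The plan is to reduce the two pairs of complementary projectors to two Hermitian involutions, exhibit a positive Hermitian operator that commutes with both, and use its spectral decomposition together with an explicit two-dimensional building block to deliver the simultaneous block-diagonalization.

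First I would set $A := 2P^0_0 - \Ibf$ and $B := 2P^1_0 - \Ibf$. Since $P^0_0 + P^0_1 = \Ibf$ and $P^0_0$ is an orthogonal projection, $A$ is Hermitian with $A^2 = \Ibf$; the same holds for $B$. Simultaneously block-diagonalizing $\{A, B\}$ in blocks of size at most two is equivalent to block-diagonalizing the four projectors, since $P^0_0 = \tfrac{1}{2}(\Ibf + A)$, $P^0_1 = \tfrac{1}{2}(\Ibf - A)$, and analogously for $P^1_0, P^1_1$, while $\Ibf$ is block-diagonal in every basis. So the task reduces to finding an orthonormal basis in which both reflections are block-diagonal with blocks of size at most two.

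Next I would introduce $M := (A+B)^2$, which is Hermitian and positive semi-definite. A short computation using $A^2 = B^2 = \Ibf$ gives $M = 2\Ibf + AB + BA$, whence $AM = 2A + A^2 B + ABA = 2A + B + ABA$ and $MA = 2A + ABA + BA^2 = 2A + ABA + B$, so $AM = MA$; symmetrically $BM = MB$. By the spectral theorem, $\Hscr$ is the orthogonal direct sum of the eigenspaces $V_\mu$ of $M$, each invariant under $A$ and $B$. On $V_\mu$ one has the anticommutation identity $AB + BA = (\mu - 2)\Ibf$, and $\mu \in [0, 4]$ since $\|A+B\| \leq 2$.

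The final step is to decompose each $V_\mu$ into $A, B$-invariant subspaces of dimension at most two that are mutually orthogonal. For the degenerate eigenvalues $\mu \in \{0, 4\}$, the relation $AB + BA = (\mu - 2)\Ibf$ together with $A^2 = B^2 = \Ibf$ forces $B = -A$ (if $\mu = 0$) or $A = B = \pm \Ibf$ on $V_\mu$; splitting $V_\mu$ into $\pm 1$-eigenspaces of $A$ produces only one-dimensional blocks in either case. For $0 < \mu < 4$ I would proceed iteratively: pick a unit eigenvector $v$ of $A|_{V_\mu}$ (which exists since $A$ is Hermitian on $V_\mu$) and set $W := \mathrm{span}\{v, Bv\}$, a subspace of dimension one or two. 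Invariance of $W$ under $B$ follows from $B^2 = \Ibf$, and invariance under $A$ follows from $ABv = (\mu - 2)v - BAv = (\mu - 2)v \mp Bv$. For any $w \in V_\mu$ with $w \perp W$, self-adjointness of $A$ and the fact $Av = \pm v$ give $\langle Aw, v\rangle = \pm\langle w, v\rangle = 0$, while $\langle Aw, Bv\rangle = \langle w, ABv\rangle = 0$ since $ABv \in W$; the analogous computations hold for $B$. Thus the orthogonal complement of $W$ inside $V_\mu$ is also $A, B$-invariant, and recursing on it exhausts $V_\mu$ in finitely many steps and yields the claimed orthonormal basis $\{\ket{w_j}, \ket{w_{-j}}\}_{j=1}^{n_0} \cup \{\ket{w_j}\}_{j = 2n_0 + 1}^n$.

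The main conceptual obstacle is identifying the commuting witness $M = (A+B)^2$: once $AM = MA$ and $BM = MB$ are established, the spectral decomposition of $M$ delivers an $A, B$-invariant orthogonal decomposition for free, and the recursive two-dimensional construction handles the remaining generic case. The only subtlety inside the recursion is maintaining orthogonality between successive 2-blocks, which is precisely what the identity $ABv \in W$ together with self-adjointness secures.
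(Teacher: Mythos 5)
Your proof is correct, and it is worth noting that the paper itself does not prove this lemma at all: it is quoted from the literature (the authors point to the lemma on page 2 of Masanes's paper), so you have supplied a self-contained argument where the paper supplies only a citation. Your route --- passing to the Hermitian involutions $A=2P^0_0-\Ibf$, $B=2P^1_0-\Ibf$, observing that $M=(A+B)^2=2\Ibf+AB+BA$ commutes with both, and then splitting each eigenspace $V_\mu$ into the $A,B$-invariant planes $\mathrm{span}\{v,Bv\}$ --- is essentially the classical Jordan/Halmos ``two projections'' argument; the standard presentation (including Masanes's) usually diagonalizes the unitary $U=AB$ instead, but since $M=2\Ibf+U+U^\dagger$ the two commuting witnesses carry the same spectral information, so the approaches are interchangeable. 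All the individual steps check out: $AM=MA$ and $BM=MB$ follow from $A^2=B^2=\Ibf$ as you compute; the degenerate eigenvalues $\mu=0,4$ do force $B=-A$ and $A=B$ respectively (equality in the triangle inequality for $\|(A+B)v\|$), after which $A$ alone diagonalizes everything; and the invariance of $W=\mathrm{span}\{v,Bv\}$ and of its orthogonal complement inside $V_\mu$ is exactly as you argue. The only cosmetic gap is at the very end: $\{v,Bv\}$ is generally not an orthogonal pair, so to produce the orthonormal basis $\{\ket{w_j},\ket{w_{-j}}\}$ demanded by the statement you should add one sentence saying that within each two-dimensional invariant subspace $W$ you pick an arbitrary orthonormal basis (e.g.\ by Gram--Schmidt); since $W$ is invariant under all four projectors, their matrices in any such basis are $2\times 2$ blocks, which is all the lemma asks for.
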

\begin{proposition}[An embedding for quantum strategies]\label{prop:embed}
    Let $Q=(\Hscr_A, \Hscr_B, \rho_{AB}, \{P_{u_A}^{(A)}(\xi_A)\}, \{P_{u_B}^{(B)}(\xi_B)\})$ be such that $\dim(\Hscr_{A})\leq m$ and $\dim(\Hscr_B)\leq n$. Then there exists a $\Qscr\ni Q'=(\Hscr'_A, \Hscr'_B, \rho'_{AB}, \{P_{u_A}^{(A)\prime}(\xi_A)\}, \{P_{u_B}^{(B)\prime}(\xi_B)\})$ such that
    $\dim(\Hscr_A')=m, \dim(\Hscr_B')=n$ and
    $\Tr(P_{u_A}^{(A)\prime}(\xi_A)\otimes P_{u_B}^{(B)\prime}(\xi_B)\rho'_{AB})=\Tr(P_{u_A}^{(A)}(\xi_A)\otimes P_{u_B}^{(B)}(\xi_B)\rho_{AB})$.
\end{proposition}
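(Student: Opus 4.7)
The plan is to realise $Q'$ by isometrically embedding the original Hilbert spaces $\Hscr_A, \Hscr_B$ into larger spaces $\Hscr'_A, \Hscr'_B$ of the required dimensions $m, n$, and then extending all operators so that the induced joint distribution is unchanged. Concretely, if $\dim(\Hscr_A)=m_A<m$, I would fix any isometry $V_A:\Hscr_A\hookrightarrow\Hscr'_A$ into a Hilbert space $\Hscr'_A$ of dimension $m$, set $\Pi_A:=V_AV_A^\dagger$ (the projector onto the image of the embedding), and construct $V_B,\Pi_B$ analogously on side $B$. This gives a decomposition $\Hscr'_A=\Pi_A\Hscr'_A\oplus(I_{\Hscr'_A}-\Pi_A)\Hscr'_A$, and likewise for $B$, into the embedded subspace and its orthogonal complement.

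Next I would extend the density operator by ``padding with zero'': define $\rho'_{AB}:=(V_A\otimes V_B)\,\rho_{AB}\,(V_A\otimes V_B)^\dagger$, which is a positive operator on $\Hscr'_A\otimes\Hscr'_B$ with unit trace, supported entirely in the range of $\Pi_A\otimes\Pi_B$. For the projectors, I would pick a reference action $\tilde u_A\in\Uscr_A$ and set
\begin{equation*}
P_{u_A}^{(A)\prime}(\xi_A):=\begin{cases}V_AP_{u_A}^{(A)}(\xi_A)V_A^\dagger, & u_A\neq\tilde u_A,\\ V_AP_{\tilde u_A}^{(A)}(\xi_A)V_A^\dagger+(I_{\Hscr'_A}-\Pi_A), & u_A=\tilde u_A,\end{cases}
\end{equation*}
with an analogous construction on side $B$. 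A short check confirms that these remain mutually orthogonal projectors on $\Hscr'_A$: the images of $V_AP_{u_A}^{(A)}(\xi_A)V_A^\dagger$ for distinct $u_A$ lie in the range of $\Pi_A$ and are mutually orthogonal there (inherited from the original PVM), while the extra term $I_{\Hscr'_A}-\Pi_A$ lives in the orthogonal complement of $\Pi_A$. Summing over $u_A$ yields $\Pi_A+(I_{\Hscr'_A}-\Pi_A)=I_{\Hscr'_A}$, so $\{P_{u_A}^{(A)\prime}(\xi_A)\}_{u_A}$ is a valid PVM on $\Hscr'_A$, and the same holds on side $B$. Hence $Q'\in\Qscr$.

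Finally, I would verify equivalence $Q\equiv Q'$ by directly computing
\begin{equation*}
Q'(u_A,u_B\mid\xi_A,\xi_B)=\Tr\!\left[\left(P_{u_A}^{(A)\prime}(\xi_A)\otimes P_{u_B}^{(B)\prime}(\xi_B)\right)\rho'_{AB}\right].
\end{equation*}
Because $\rho'_{AB}$ is supported in the range of $\Pi_A\otimes\Pi_B$, any factor of the form $(I_{\Hscr'_A}-\Pi_A)$ or $(I_{\Hscr'_B}-\Pi_B)$ is annihilated under the trace, so the "padding" added at $\tilde u_A,\tilde u_B$ contributes nothing. Using $V_A^\dagger V_A=I_{\Hscr_A}$ and $V_B^\dagger V_B=I_{\Hscr_B}$ together with cyclicity of the trace, the expression collapses to $\Tr[(P_{u_A}^{(A)}(\xi_A)\otimes P_{u_B}^{(B)}(\xi_B))\rho_{AB}]=Q(u_A,u_B\mid\xi_A,\xi_B)$. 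The only real subtlety -- and the one step that has to be done with care -- is positioning the extra rank term $(I-\Pi)$ in the complement of the embedded subspace so that (i) it preserves orthogonality of the PVM on the enlarged space, and (ii) it is simultaneously invisible to $\rho'_{AB}$; the construction above is engineered precisely so that both requirements hold, and the rest is book-keeping.
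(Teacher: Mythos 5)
Your proof is correct and follows essentially the same route as the paper's: embed $\Hscr_A,\Hscr_B$ isometrically into spaces of dimensions $m,n$, zero-pad the state onto the embedded subspace, and carry over the original projectors, so that the trace formula immediately yields $Q'\equiv Q$. The one place where you are more careful than the paper is completeness of the measurement: the paper simply sets $P_{u_i}^{(i)\prime}(\xi_i)=P_{u_i}^{(i)}(\xi_i)$, whose sum over $u_i$ is only the projector onto the embedded subspace rather than the identity on $\Hscr_i'$, whereas your padding of one reference action with $I_{\Hscr_i'}-\Pi_i$ restores $\sum_{u_i}P_{u_i}^{(i)\prime}(\xi_i)=I_{\Hscr_i'}$ while remaining invisible to the zero-padded state --- a worthwhile refinement, since that completeness condition is precisely what is invoked later (e.g., when Jordan's lemma is applied in the proof of Theorem \ref{thm:qubitsenough}).
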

\begin{proof}
    Let $\dim (\Hscr_A)=m_0\leq m$ and consider an orthonormal basis $\{\ket{w_i}\}_{i=1}^{m}$ spanning $\Hscr_A'$ with $\{\ket{w_i}\}_{i=1}^{m_0}$ spanning $\Hscr_A$. Similarly let $\dim (\Hscr_B)=n_0\leq n$ and consider an orthonormal basis $\{\ket{v_j}\}_{j=1}^{n}$ spanning $\Hscr_B'$ with $\{\ket{w_j}\}_{j=1}^{n_0}$ spanning $\Hscr_B$. Let $\rho'_{AB}\in\Hscr_A'\otimes\Hscr_B'$ be such that $$\braket{w_i, v_j|\rho'_{AB}|w_k, v_\ell}=\braket{w_i, v_j|\rho_{AB}|w_k, v_\ell}$$ for $i, k\leq m_0, j, \ell \leq n_0$ and $\braket{w_i, v_j|\rho'_{AB}|w_k, v_\ell}=0$ otherwise. Such a $\rho'_{AB}$ exists by construction. To do so, one takes the submatrix of $r_{AB}$ of $\rho_{AB}$ with rows and columns corresponding to $\{\ket{w_i, v_j}\}_{i=1, j=1}^{m_0, n_0}$ and take $\rho'_{AB}=r_{AB}/\Tr{r_{AB}}$.  Since $\Hscr_i\subset \Hscr_i'$ for $i=A, B$, projectors $P_{u_i}^{(i)}(\xi_i)$ in $\Hscr_i$ are also valid projectors in $\Hscr_i'$ so let $P_{u_i}^{(i)\prime}(\xi_i)=P_{u_i}^{(i)}(\xi_i)$. It is now easy to see that $\Tr(P_{u_A}^{(A)\prime}(\xi_A)\otimes P_{u_B}^{(B)\prime}(\xi_B)\rho'_{AB})=\Tr(P_{u_A}^{(A)}(\xi_A)\otimes P_{u_B}^{(B)}(\xi_B)\rho_{AB})$.
\end{proof}

Now, we begin the proof of Theorem \ref{thm:qubitsenough}. \\
Step 1) We will show that any strategy $Q\in\Qscr$ can be written as a convex combination $\sum_{j,k} a_{jk}  Q^{jk}$, where $ Q^{jk}\in \Qscr_2$ and $a_{jk}\geq 0, \sum_{j,k}a_{jk}=1$.

Let $Q=(\Hscr_A, \Hscr_B, \rho_{AB}, \{P_{u_A}^{(A)}(\xi_A)\}, \{P_{u_B}^{(B)}(\xi_B)\})$ where $\Hscr_A$ and $\Hscr_B$ are $n$ and $m$ dimensional Hilbert spaces respectively, $P_{u_i}^i(\xi_i)$ are projection operators in $\Hscr_i$ for each $\xi_i\in\{0, 1\}, i\in\{A, B\}$ and they satisfy $P^{(i)}_{u_i^0}(0)+P^{(i)}_{u_i^1}(0)=\Ibf$ and $P^{(i)}_{u_i^0}(1)+P^{(i)}_{u_i^1}(1)=\Ibf$. Define $\Hscr_{AB}=\Hscr_A\otimes\Hscr_B$. We employ Lemma \ref{prop:jordanlemma} with $P^\xi_i \equiv P^{(A)}_{u^i_A}(\xi)$. There exists a basis $ \{\ket{w_j}, \ket{w_{-j}}\}_{j=1}^{n_0}\cup \{\ket{w_j}\}_{j=2n_0+1}^n$ and coefficients $p_{ij}(u_A, \xi_A)$ such that with the specification $p_{j-j}(u_A,\xi_A)=p_{-jj}(u_A, \xi_A)=p_{-j-j}(u_A, \xi_A) \equiv 0$ for $j>n_0$. Define $\ket{w_{-j}}=0$ for $j>2n_0$ and $P_{u_A}^{(A(j))}(\xi_A):=$
\begin{align}
    &p_{jj}(u_A, \xi_A)\ket{w_j}\bra{w_{j}}+p_{j-j}(u_A, \xi_A)\ket{w_j}\bra{w_{-j}}\label{eq:decpro1}\\
    &+p_{-jj}(u_A, \xi_A)\ket{w_{-j}}\bra{w_{j}}+p_{-j-j}(u_A, \xi_A)\ket{w_{-j}}\bra{w_{-j}}.\nonumber
\end{align}
Now for $N_0:=\{1,..., n_0\}\cup \{2n_0+1,...,n\}$, we express
\begin{equation}\label{eq:jordanpro1}
    P_{u_A}^{(A)}(\xi_A)= \sum_{j\in N_0} P_{u_A}^{A(j)}(\xi_A).
\end{equation}
Correspondingly for $m$ dimensional $\Hscr_B$,  one has a similar block-diagonalizing orthonormal basis $\{\ket{v_k}, \ket{v_{-k}}\}_{k=1}^{m_0}\cup \{\ket{v_k}\}_{k=2m_0+1}^{m}$ so that there exist coefficients $q_{ij}(u_B, \xi_B)$ such that from Lemma \ref{prop:jordanlemma}, with $q_{k-k}(u_B,\xi_B)=q_{-kk}(u_B, \xi_B)=q_{-k-k}(u_B, \xi_B)\equiv 0$ for $k>m_0$. Now let $\ket{v_{-k}}=0$ for $k>2m_0$ and
\begin{align}
    &P_{u_B}^{B(k)}(\xi_B):=q_{kk}(u_B, \xi_B)\ket{v_k}\bra{v_{k}}+q_{k-k}(u_B, \xi_B)\ket{v_k}\bra{v_{-k}}\nonumber\\
    &+q_{-kk}(u_B, \xi_B)\ket{v_{-k}}\bra{v_{k}}+q_{-k-k}(u_B, \xi_B)\ket{v_{-k}}\bra{v_{-k}}.\label{eq:decpro2}
\end{align}
Defining $M_0:=\{1,..., m_0\}\cup \{2m_0+1,...,m\}$ allows us the expression
\begin{equation}\label{eq:jordanpro2}
    P_{u_B}^{(B)}(\xi_B)=\sum_{k\in M_0} P_{u_B}^{B(k)}(\xi_B).
\end{equation}
$\rho_{AB}$ can be expressed as
\begin{equation}\label{eq:jordanrho}
    \rho_{AB}=\sum_{j\in N_0}\sum_{k\in M_0} r_{AB}^{jk}
\end{equation}
where $r_{AB}^{jk}$ denotes an operator of the form
$$r_{AB}^{jk}=\sum_{i, p=\pm j}\sum_{l, q=\pm k} r(i, l, p, q)\ket{w_i, v_l}\bra{w_p, v_q}.$$
Since $\rho_{AB}\succeq 0$ we have $\Tr (r^{jk}_{AB})\geq 0$ for all $j\in N_0, k\in M_0$  as we have $$\braket{w_{\pm j}, v_{\pm k}|\rho_{AB}| w_{\pm j}, v_{\pm k}}=\braket{w_{\pm j}, v_{\pm k}|r^{jk}_{AB}| w_{\pm j}, v_{\pm k}}\geq 0.$$
Let $\Hscr_A^j=\Span\{\ket{w_j}, \ket{w_{-j}}\}$ and $\Hscr_B^k=\Span\{\ket{v_k}, \ket{v_{-k}}\}$. Define an operator $\rho_{AB}^{jk}$ on  $\Hscr^j_A\times \Hscr_B^k$ given by
\begin{equation}\label{eq:jordanrhosimp}
    \rho_{AB}^{jk}:=\begin{cases}
        \Tr(r_{AB}^{jk})\inv r_{AB}^{jk} & r^{jk}_{AB}\neq 0\\
        0 & r^{jk}_{AB}=0
    \end{cases}
\end{equation}
We claim that $\rho_{AB}^{jk}$ is a density operator when $r^{jk}_{AB}\neq 0$. Notice that $\Tr (\rho_{AB}^{jk}) =1$ by definition and $r^*(p,q,i,l)=r(i,l,p,q)$ for all $i, l, p, q$ since $\rho_{AB}$ in \eqref{eq:jordanrho} is Hermitian so that $\rho^{jk\dagger}_{AB}=\rho^{jk}_{AB}$. Finally, $\rho_{AB}\succeq 0$ since $r^{jk}_{AB}\succeq 0$ and $\Tr (r^{jk}_{AB})\geq 0$. Further, as defined by \eqref{eq:decpro1} {and} \eqref{eq:decpro2}, consider projectors $P_{u_A}^{A(j)}(\xi_A)$ and $P_{u_B}^{B(k)}(\xi_B)$ in spaces $\Hscr_A^j$ and $\Hscr_B^k$ respectively. Recall \eqref{eq:jordanpro1} and \eqref{eq:jordanpro2} and specify strategies $Q^{jk}:=(\Hscr_A^j, \Hscr_B^k, \rho_{AB}^{jk}, P_{u_A}^{A(j)}(\xi_A),  P_{u_B}^{B(k)}(\xi_B))$ for  $j\in N_0,k\in M_0$. Now $1\leq \dim(\Hscr_A^j)=\dim(\Hscr^k_B)\leq 2$ by construction so from Proposition, we have $Q^{jk}\in\Qscr_2$, and notice from \eqref{eq:jordanpro1}, \eqref{eq:jordanpro2}, \eqref{eq:jordanrho} and \eqref{eq:jordanrhosimp} that
\begin{align*}
    Q(u_A, u_B&|\xi_A, \xi_B)=\Tr(P_{u_A}^{(A)}(\xi_A)\otimes P_{u_B}^{(B)}(\xi_B) \rho_{AB})\\
    &=\sum_{j,k} \Tr(P_{u_A}^{A(j)}(\xi_A)\otimes P_{u_B}^{B(k)}(\xi_B) r^{jk}_{AB})\\
    &=\sum_{j,k} \Tr(r^{jk}_{AB}) \Tr(P_{u_A}^{A(j)}(\xi_A)\otimes P_{u_B}^{B(k)}(\xi_B) \rho^{jk}_{AB})\\
    &=\sum_{j,k} \Tr(r^{jk}_{AB}) Q^{jk}(u_A, u_B|\xi_A, \xi_B).
\end{align*}
Now $\Tr(r^{jk}_{AB}) \geq 0$ and
$\sum_{jk} \Tr(r^{jk}_{AB})=\Tr(\rho_{AB})=1$ whereby $Q$ is a convex combination of strategies $ Q^{jk}\in\Qscr_2$.  This completes the proof of the first part of the theorem.\\
Step 2) First, note that the inequality $\inf_{Q\in\Qscr_2} J(Q;D)\geq \inf_{Q\in\Qscr} J(Q;D)$ holds since $\Qscr_2\subset \Qscr$. Suppose that the inequality is strict so that $\inf_{Q\in\Qscr_2} J(Q;D)= \inf_{Q\in\Qscr} J(Q;D)+\delta$ where $\delta>0$. But $\exists Q\in \Qscr$ such that $J(Q;D)\leq \inf_{Q\in\Qscr} J(Q;D)+\delta/2$ by definition of an infimum.
Expected cost is a linear function of $Q$ so that from part 1) we have $Q$, $\exists a_{jk}>0, \sum a_{jk}=1, Q^{jk}\in\Qscr_2$ such that $J(Q;D)=\sum_{j,k}a_{jk}\sum_\xi\sum_u J(u, \xi)Q^{jk}(u|\xi)$. It follows that $$\min_{j,k} J(Q^{jk};D)\leq \inf_{Q\in\Qscr} J(Q;D)+\delta/2 < \inf_{Q\in\Qscr_2} J(Q;D)$$ which is a contradiction.\\
Step 3) Let us attend our first claim: i) $\Qscr_2=\conv(\underline{\Qscr}_2)$.
Consider an arbitrary $Q=(\Hscr_A, \Hscr_B, \rho_{AB},  \{P_{u_A}^{(A)}(\xi_A)\}, \{P_{u_B}^{(B)}(\xi_B)\})\in \Qscr_2$. Then there exist finitely many states (which we index by $i$ in some finite range) $\ket{\psi_i}\in \Hscr_A\otimes \Hscr_B$ and corresponding probabilities $p_i \geq 0, \sum p_i =1$ such that
$\rho_{AB}=\sum_i p_i\ket{\psi_i}\bra{\psi_i}.$
Now define
$Q_i:=(\Hscr_A, \Hscr_B, \ket{\psi_i}\bra{\psi_i},  \{P_{u_A}^{(A)}(\xi_A)\}, \{P_{u_B}^{(B)}(\xi_B)\})\in\underline{\Qscr}_2 .$
From the linearity of trace, $Q$ satisfies
\begin{align*}
    Q(u_A, u_B|\xi_A, \xi_B)
    =\sum_{i}p_i Q_i(u_A, u_B|\xi_A, \xi_B).
\end{align*}
Hence, $Q\in\conv(\underline\Qscr_2)$ holds. The claim $\Qscr_2=\conv(\underline\Qscr_2)$ follows since $Q$ was arbitrarily chosen. \\
Towards ii) $\underline\Qscr_2=\widehat\Qscr_2$, consider any $Q\in\underline{\Qscr}_2$:
\begin{equation*}
    Q=(\Hscr_A, \Hscr_B, \ket{\psi_{AB}}\bra{\psi_{AB}},
    \{P_{u_A}^{(A)}(\xi_A)\}, \{P_{u_B}^{(B)}(\xi_B)\})
\end{equation*}
such that $\ket{\psi_{AB}}\in \Hscr_A\otimes \Hscr_B$. Let $\{\ket{z^+_i}, \ket{z^-_i}\}$ denote an orthonormal basis of $\Hscr_i$ for $i=A, B$. We can then express $\ket{\psi_{AB}}\in \Hscr_A\otimes \Hscr_B$ as a unit vector in the basis $\{\ket{z^+_A, z^+_B}, \ket{z^+_A, z^-_B}, \ket{z^-_A, z^+_B}, \ket{z^-_A, z^-_B}\}$. Let such an expression be
$$\ket{\psi_{AB}}=c_{1}\ket{z^+_A, z^+_B}+ c_{2}\ket{z^+_A, z^-_B}+c_{3}\ket{z^-_A, z^+_B}+c_{4}\ket{z^-_A, z^-_B}$$
for some coefficients $c_{1}, c_{2}, c_{3}, c_{4}\in \mathbb{C}$. It is well known that each such $\ket{\psi_{AB}}$ admits a Schmidt decomposition \cite{neilsen2004qcqi}, \ie, for each (normalized) tuple $(c_{1}, c_{2}, c_{3}, c_{4})$ there exists an $ a\in [0, 1]$ and orthonormal bases (also known as the Schmidt bases), $\{\ket{s^+_A}, \ket{s^-_A}\}$ and  $\{\ket{t^+_B}, \ket{t^-_B}\}$ of  $\mathcal{H}_A$ and $\mathcal{H}_B$ such that
\begin{equation}
    \ket{\psi_{AB}}:=a\ket{s^+_A, t^+_B } + \sqrt{1-a^2}\ket{s^-_A, t^-_B}.
\end{equation}
Take $\alpha=2\arccos(a)$. Further, we can find unitary operators $\sf{U}_i\in\Bscr(\Hscr_i)$, $i\in\{A, B\}$ that provide transformations among the given basis vectors $\{\ket{z^+_i}, \ket{z^-_i}\}, i=A, B$ and the Schmidt bases as follows: $\mathsf{U}_A\ket{z^+_A}=\ket{t^+_A}, \mathsf{U}_A\ket{z^-_A}=\ket{t^-_A}, \mathsf{U}_B\ket{z^+_B}=\ket{s^+_B}, \mathsf{U}_B\ket{z^-_B}=\ket{s^-_A}.$
Notice that then
$\mathsf{U}_A\otimes \mathsf{U}_B \ket{\Phi^{\alpha}_{AB}}=\ket{\psi_{AB}}$ where we recall our notation $\ket{\Phi^\alpha_{AB}}$ from \eqref{eq:alphawavefunc}.
Define
$$K_{u_A}^{(A)}(\xi_A)=\mathsf{U}_A^\dagger P_{u_A}^{(A)}(\xi_A)\mathsf{U}_A,
K_{u_B}^{(B)}(\xi_B)=\mathsf{U}_B^\dagger P_{u_B}^{(B)}(\xi_B)\mathsf{U}_B,$$
and notice that $$K_{u_i}^{(i)}(\xi_i)^\dagger=K_{u_i}^{(i)}(\xi_i), K_{u_i}^{(i)}(\xi_i)K_{u_i}^{(i)}(\xi_i)=K_{u_i}^{(i)}(\xi_i)$$ so that they are valid projection operators. Define, \\
$\widehat{Q}:=(\Hscr_A, \Hscr_B, \ket{\Phi^{\alpha}_{AB}}\bra{\Phi^{\alpha}_{AB}}, \{K_{u_A}^{(A)}(\xi_A)\}, \{K_{u_B}^{(B)}(\xi_B)\}).$
Clearly $\widehat{Q}\in \Qscr^\alpha_2$.  Further notice that  $$\widehat Q(u_A, u_B|\xi_A, \xi_B)=\Tr(K_{u_A}^{(A)}(\xi_A)\otimes K_{u_B}^{(B)}(\xi_B) \ket{\Phi^{\alpha}_{AB}}\bra{\Phi^{\alpha}_{AB}}).$$ Substituting $K_{u_A}^{(A)}(\xi_A)$ and $K_{u_B}^{(B)}(\xi_B)$ and using the unitarity of $\mathsf{U}_A, \mathsf U_B$, it follows that
\begin{align*}
    \widehat Q(u_A, u_B|\xi_A, \xi_B)&=
    \Tr(P_{u_A}^{(A)}(\xi_A)\otimes P_{u_B}^{(B)}(\xi_B)\ket{\psi_{AB}}\bra{\psi_{AB}})\\ &= Q (u_A, u_B|\xi_A, \xi_B).
\end{align*}
The claimed equality $\underline\Qscr_2=\widehat\Qscr_2$ follows since $Q$ was arbitrary.\\
4) Recall that $\inf_{Q\in\Qscr} J(Q;D)=\inf_{Q\in\Qscr_2} J(Q;D)$ holds from Theorem \ref{thm:qubitsenough}. On the other hand, since $\underline\Qscr_2=\widehat\Qscr_2$, $\inf_{Q\in\widehat\Qscr_2} J(Q;D)=\inf_{Q\in\underline\Qscr_2} J(Q;D)$ follows. It remains to show that $\inf_{Q\in\Qscr_2} J(Q;D)=\inf_{Q\in\underline\Qscr_2} J(Q;D)$ so suppose otherwise, \ie, $\inf_{Q\in\underline\Qscr_2} J(Q;D)-\inf_{Q\in\Qscr_2} J(Q;D)=\delta>0$.  Recall that $\Qscr_2=\conv(\underline\Qscr_2)$.
By definition of an infimum, we have a $Q\in\Qscr_2$ such that $J(Q;D)< \inf_{Q\in\Qscr_2} J(Q;D)+\delta/2$.
Now for some set of $p_i\geq 0, \sum_i p_i=1$, we have $Q=\sum_i p_i Q_i$ where $Q_i\in \underline \Qscr_2$.
It follows that $\min_i J(Q_i; D)\leq \inf_{Q\in\Qscr_2} J(Q;D)+\delta/2$ which contradicts the supposition. Thus $\inf_{Q\in\underline{\Qscr}_2}=\inf_{Q\in\Qscr_2} J(Q;D)$. We hereby conclude our proof of the Theorem \ref{thm:qubitsenough}.
%

\section{Proof of Proposition \ref{prop:alphaprob}}\label{sec:alphaprob}
We now prove Proposition \ref{prop:alphaprob} and show how Table \ref{tab:alphaprobs} (from the main paper) can be derived to paramterically specify our quantum strategies in $\Qscr^\alpha_2$ (recall \eqref{eq:alphaqubitellitope} in the main paper).
We will drop the subscripts $i(\xi_i)$ in the kets and on parameters that refer to players and observations for now, for their presence is apparent. So let
$$a:=(\cos{\theta_a}\sin{\phi_a}, \sin{\theta_a}\sin{\phi_a}, \cos{\phi_a}), $$
$$b:=(\cos{\theta_b}\sin{\phi_b}, \sin{\theta_b}\sin{\phi_b}, \cos{\phi_b}).$$
For brevity, let $Q(u_A^+, u^+_B) := Q(u^+_A(\xi_A),u^+_B(\xi_B)|\xi_A,\xi_B)$.
\begin{align*}
    Q(u_A^+, u^+_B)&=|\braket{a^+, b^+|\Phi^\alpha_{AB}}|^2\\
    =&|\cos\frac{\alpha}{2}\braket{a^+, b^+|z^+, z^+}+\sin\frac{\alpha}{2}\braket{a^+, b^+|z^-, z^-}|^2.
\end{align*}
Now using Bloch representation~\cite{neilsen2004qcqi} for qubit states, we have for $i=a, b$:
\begin{align*}
\ket{i^+}&=\cos\frac{\phi_i}{2}\ket{z^+}+e^{\iota\theta_i}\sin\frac{\phi_i}{2}, \\ \ket{i^-}&=-e^{-\iota\theta_i}\sin\frac{\phi_i}{2}\ket{z^+}+\cos\frac{\phi_i}{2}\ket{z^-}
\end{align*}
 whereby
 \begin{align*}
     \braket{i^+|z^+}&=\cos\frac{\phi_i}{2}, \braket{i^+|z^-}=e^{\iota\theta_i}\sin\frac{\phi_i}{2}, \\ \braket{i^-|z^+}&=-e^{-\iota\theta_i}\sin\frac{\phi_i}{2}, \braket{i^-|z^-}=\cos\frac{\phi_i}{2}.
 \end{align*}
Indeed then,
\begin{align*}
    Q&(u_A^+, u_B^+)=+\sin^2{(\theta_a+\theta_b)}\sin^2\frac{\alpha}{2}\sin^2\frac{\phi_a}{2}\sin^2\frac{\phi_b}{2}\\ &+|\cos\frac{\alpha}{2}\cos\frac{\phi_a}{2}\cos\frac{\phi_b}{2}+e^{-\iota(\theta_a+\theta_b)}\sin\frac{\alpha}{2}\sin\frac{\phi_a}{2}\sin\frac{\phi_b}{2}|^2\\
    &=\cos^2\frac{\alpha}{2}\cos^2\frac{\phi_a}{2}\cos^2\frac{\phi_b}{2}+\sin^2\frac{\alpha}{2}\sin^2\frac{\phi_a}{2}\sin^2\frac{\phi_b}{2}\\
    &+\frac{1}{4}\sin\alpha\cos{(\theta_a+\theta_b)}\sin{\phi_a}\sin{\phi_b}.
\end{align*}
Other entries in Table \ref{tab:alphaprobs} can be similarly computed.

\section{Proof of Proposition \ref{prop:pienough}}\label{app:pienough}
We sketch a rigorous proof below, skipping some easily verifiable algebraic details.
First, note that $\Qscr^*$ is non-empty since $\widehat \Qscr_2$ is compact since it is specified by a compact set of parameters along with a finite number of action assignments $\{u^\pm_i(\xi_i)\}$ whereby the minimization is over a finite union of compact sets. Now let $Q=(\alpha, \theta, \phi, \{u_i^\pm(\xi_i)\})\in\Qscr^*$. We now specify a quantum strategy $Q^0$ with the desired property directly and show that it attains the same cost as $Q$. Take $Q^0=(\alpha, \theta^0, \phi^0, \{u^\pm_i(\xi_i)\})$ where $\alpha$ and $\{u^\pm_i(\xi_i)\}$ are the same as in $Q$. Suppose the measurement directions in $Q,Q^0$ are denoted by $a(\xi_A), b(\xi_B)$ and $a^0(\xi_A),b^0(\xi_B)$. We rotate the physical frame so that the $x$-axis now points along $a(0)$ and we take $a^0(\xi_A),b^0(\xi_B)$ to be along $a(\xi_A),b(\xi_B)$  but described using polar coordinates with respect to this new frame. Thus,
$\theta^0_{a(0)}=\phi^0_{a(0)}=0$. The other angles are given through the equations
\begin{align}
    &i^0(\xi_i)=(\cos\theta^0_{i^0(\xi_i)}\sin\phi^0_{i^0(\xi_i)}, \sin\theta^0_{i^0(\xi_i)}\sin\phi^0_{i^0(\xi_i)}, \cos\phi^0_{i^0(\xi_i)} )\nonumber\\
    &=(\sin\phi_{i(\xi_i)}(\cos\phi_{a0}\cos\theta_{a0}\cos(\theta_{a0}-\theta_{i(\xi_i)})\nonumber\\
    &+\sin\theta_{a0}\sin(\theta_{a0}-\theta_{i(\xi_i)}))-\cos\phi_{i(\xi_i)}\cos\theta_{a0}\sin\phi_{a0}, \nonumber\\
    & \sin\phi_{i(\xi_i)}(\cos\phi_{a0}\sin\theta_{a0}\cos(\theta_{a0}-\theta_{i(\xi_i)})\nonumber\\
    &-\cos\theta_{a0}\sin(\theta_{a0}-\theta_{i(\xi_i)}))-\cos\phi_{i(\xi_i)}\sin\theta_{a0}\sin\phi_{a0}, \nonumber\\
    & \cos\phi_{a0}\cos\phi_{i(\xi_i)}+\cos(\theta_{a0}-\theta_{i(\xi_i)}\sin\phi_{a0}\sin\phi_{i(\xi_i)})\nonumber\\
    &=:(i^0(\xi_i)_x, i^0(\xi_i)_y, i^0(\xi_i)_z).\label{eq:impliciteq}
\end{align}
for $\xi_i\in\{0, 1\}, i=A, B$.
We will demonstrate how $Q(u_A^+(\xi_A), u_B^+(\xi_B)|\xi_A, \xi_B)=Q^0(u_A^+(\xi_A), u_B^+(\xi_B)|\xi_A, \xi_B)$ can be shown, following which the rest of the equalities can be similarly verified.
We know $Q^0(u_A^+(\xi_A), u_B^+(\xi_B)|\xi_A, \xi_B)=|\braket{a^0(\xi_A)^{+}, b^0(\xi_B)^{+}|\Phi^\alpha_{AB}}|^2=$ $$\Tr(\ket{\Phi^\alpha_{AB}}\bra{\Phi^\alpha_{AB}}\ket{a^0(\xi_A)^+}\bra{a^0(\xi_A)^+}\otimes \ket{b^0(\xi_B)^+}\bra{b^0(\xi_B)^+}).$$
Now, using the well known Pauli formula for spin-2 density matrices~\cite{preskill1998notes}, we have
$$\ket{a^0(\xi_A)^+}\bra{a^0(\xi_A)^+}=\begin{pmatrix}
    \frac{1+a^0(\xi_A)_z}{2} & \frac{a^0(\xi_A)_x-\iota a^0(\xi_A)_y}{2}\\
    \frac{a^0(\xi_A)_x+\iota a^0(\xi_A)_y}{2} & 	\frac{1-a^0(\xi_A)_z}{2}
\end{pmatrix}$$
$$\ket{b^0(\xi_B)^+}\bra{b^0(\xi_B)^+}=\begin{pmatrix}
    \frac{1+b^0(\xi_B)_z}{2} & \frac{b^0(\xi_B)_x-\iota b^0(\xi_B)_y}{2}\\
    \frac{b^0(\xi_B)_x+\iota b^0(\xi_B)_y}{2} & 	\frac{1-b^0(\xi_B)_z}{2}
\end{pmatrix}$$
where $a^0(\xi_A)_x, a^0(\xi_A)_y, a^0(\xi_A)_z, b^0(\xi_B)_x, b^0(\xi_B)_y$ and $b^0(\xi_B)_z$ are evaluted in terms of $\theta^0, \phi^0$ through \eqref{eq:impliciteq}. Hence substituting from \eqref{eq:impliciteq} and evaluating $\Tr(\ket{\Phi^\alpha_{AB}}\bra{\Phi^\alpha_{AB}}\ket{a^0(\xi_A)^+}\bra{a^0(\xi_A)^+}\otimes \ket{b^0(\xi_B)^+}\bra{b^0(\xi_B)^+})$ on an algebraic calculator confirms $Q^0((u_A^+(\xi_A), u_B^+(\xi_B)|\xi_A, \xi_B)$ matches $Q(u_A^+(\xi_A), u_B^+(\xi_B)|\xi_A, \xi_B)$ where the latter is available in Table \ref{tab:alphaprobs} as before.

\section{Proof of Proposition \ref{prop:bounds}}\label{app:bounds}
We have, from the simplification in Equation~\eqref{eq:costsimplify} (from the main paper)
\begin{align*}
    J(\pi; D)&=-\sum\Pbb(\xi_A, \xi_B, 0)\pi(u_A=u_B|\xi_A, \xi_B)\\
    &-\chi\sum\Pbb(\xi_A, \xi_B, 1)\pi(u_A\neq u_B|\xi_A, \xi_B)
\end{align*}
It is straightforward to compute that
\begin{align*}
    J(\pi^{0000};D)&=J(\pi^{0010};D)=-k-s-2t \\
    J(\pi^{0011};D)&=J(\pi^{0011};D)=-\chi(s+k+2t) \\
    J(\pi^{1100};D)&= J(\pi^{1110};D)=-k-s-2\chi t \\
    J(\pi^{1101};D)&= J(\pi^{1110};D)=-\chi(k+s)-2t,\\
    J(\pi;D)&=-(1+\chi)(s+t)
\end{align*}
 for $\pi\in\{\pi^{0100}, \pi^{1000},\pi^{0110},\pi^{1010}\}$ and $J(\pi;D)=-(1+\chi)(k+t)$ for $\pi\in\{\pi^{1001}, \pi^{1011},\pi^{0101},\pi^{0111}\}$.
Proposition \ref{prop:bounds} now follows trivially.

\section{Proof of Proposition \ref{prop:idassign}}\label{proof:idassign}
We will require repeated reference to Table \ref{tab:uaequaltoub} (from the main paper) throughout this proof.
\begin{table*}[t]
    \centering
    \begin{tabular}{|c|c|c|}
        \hline
        No. & Assignment Case $(j= -i)$ & $Q(u_i=u_j|\xi_i, \xi_j)$ \\
        \hline
        I & $u_i^+(\xi_i)=u_i^-(\xi_i)=u_j^+(\xi_j)=u_j^-(\xi_j)$ & $\delta(u_i, u_i^+(\xi_i))\delta(u_j, u_j^+(\xi_j))$ \\
        \hline
        II & $u_i^+(\xi_i)=u_i^-(\xi_i)=u_j^+(\xi_j)\neq u_j^-(\xi_j)$ & $
        \cos^2({\alpha}/{2})\cos^2({\phi_{j(\xi_j)}}/{2})+\sin^2({\alpha}/{2})\sin^2({\phi_{j(\xi_j)}}/{2})$ \\
        \hline
        III &  $u_i^+(\xi_i)=u_i^-(\xi_i)=u_j^-(\xi_j)\neq u_j^+(\xi_j)$& $
        \cos^2({\alpha}/{2})\sin^2({\phi_{j(\xi_j)}}/{2})+\sin^2({\alpha}/{2})\cos^2({\phi_{j(\xi_j)}}/{2})$  \\
        \hline
        IV & $u_i^+(\xi_i)\neq u_i^-(\xi_i)=u_j^-(\xi_j)\neq u_j^+(\xi_j)$ & $
        \cos^2\frac{\phi_{i(\xi_i)}}{2}\cos^2\frac{\phi_{j(\xi_j)}}{2}+\sin^2\frac{\phi_{i(\xi_i)}}{2}\sin^2\frac{\phi_{j(\xi_j)}}{2}+ 2\beta_{\alpha, \theta_{i(\xi_i)}, \theta_{j(\xi_j)}, \phi_{i(\xi_i)}, \phi_{j(\xi_j)}}$ \\
        \hline
        V & $u_i^+(\xi_i)\neq u_i^-(\xi_i)=u_j^+(\xi_j)\neq u_j^-(\xi_j)$ & $
        \cos^2\frac{\phi_{i(\xi_i)}}{2}\sin^2\frac{\phi_{j(\xi_j)}}{2}+\sin^2\frac{\phi_{i(\xi_i)}}{2}\cos^2\frac{\phi_{j(\xi_j)}}{2}- 2\beta_{\alpha, \theta_{i(\xi_i)}, \theta_{j(\xi_j)}, \phi_{i(\xi_i)}, \phi_{j(\xi_j)}}$ \\
        \hline
    \end{tabular}
    \caption{Probability of equal actions over different assignment cases.  Here $
        \beta_{\alpha, \theta_{a(\xi_A)}, \theta_{b(\xi_B)}, \phi_{a(\xi_A)}, \phi_{b(\xi_B)}}
        =(1/4)\sin\alpha\cos(\theta_{a(\xi_A)}+\theta_{b(\xi_B)})\sin\phi_{a(\xi_A)}\sin\phi_{b(\xi_B)}
        $}
    \label{tab:uaequaltoub}
\end{table*}
Let $Q=(\alpha, \theta, \phi, \{u^{\pm}_i(\xi_i)\}_i)\in\widehat{\Qscr}$. We show that for any $Q$ there exists $\alpha^0, \theta^0, \phi^0$ and $Q_0:= (\alpha^0, \theta^0, \phi^0, \{v_i^\pm(\xi_i)\})$ such that $J(Q_0;D)\leq J(Q;D)$. To do this, we run through the possible action assignments $\{u^\pm_i(\xi_i)\}_i$ in $Q$ over the cases below. Table \ref{tab:uaequaltoub} enumerates the relevant expression for $Q$ for each action assignment.

We introduce some notation we utilise extensively through this proof, and sparsely later in this article.
We denote $J(Q';D)-J(Q;D)=:\Lambda(Q',Q)$ while the instance $D$ is fixed by context. Recall from \eqref{eq:costsimplify} (from the main paper) that $\kappa(\xi_A, \xi_B)=\chi\Pbb(\xi_A, \xi_B, 1)-\Pbb(\xi_A,\xi_B, 0)$. Denote $$d(Q', Q|\xi_A, \xi_B):=Q'(u_A=u_B|\xi_A, \xi_B)-Q(u_A=u_B|\xi_A, \xi_B).$$ It is then clear from \eqref{eq:costsimplify} that
$$\Lambda(Q',Q)=\sum_{\xi_A, \xi_B}\kappa(\xi_A, \xi_B) d(Q', Q|\xi_A, \xi_B).$$  Additionally, if the assignment $u$ obeys conditions of row $X$ of Table \ref{tab:uaequaltoub}, we denote $Q(u_A=u_B|\xi_A, \xi_B)$ by $Q^X(u_A=u_B|\xi_A, \xi_B)$.  $C_i$ denotes the set of assignments encapsulated by a \textit{Case} $i$ below.

\textit{Case 1:} $C_1:=\{u|u^+_i(\xi_i)\neq u^-_i(\xi_i) \forall i\in \{A, B\}, \xi_i\in\{0, 1\}\}$.

$|C_1|=16$ (two choices for each of $u_A^+(0), u_A^+(1), u_B^+(0), u_B^+(1)$), and $\{v_i^\pm(\xi_i)\}_i$ which is particular to the our parameterised $Q_0(\alpha, \theta, \phi)$ obeys $v\in C_1$. The key strategic ingredient that goes in the cost is $Q(u_A=u_B|\xi_A, \xi_B)$ computed for each case in Table \ref{tab:uaequaltoub}. For strategies that resort to \textit{Case 1}, the relevant computation is given by the rows IV and V in the table. Clearly, the assignment $\{v_i^\pm(\xi_i)\}_i$ corresponding to $Q_0$ obeys conditions of row IV. We show that $\alpha^0, \theta^0$ and $\phi^0$ can be tuned so appropriately for $Q_0$ to match any $Q$.
So let $\alpha^0=\alpha,$
\begin{equation}\label{eq:theta0c1}
    \theta^0_{i(\xi_i)}=\begin{cases}
        \theta_{i(\xi_i)} & u_i^{0}= u_i^+(\xi_i)\\
        \pi+\theta(\xi_i)+2m(\xi_i)\pi  & \text{otherwise},
    \end{cases}
\end{equation}
where $m(\xi_i)\in\{0, -1\}$ is set to ensure $\theta'_{i(\xi_i)}\in [0, 2\pi)$ and
\begin{equation}\label{eq:phi0c1}
    \phi^0_{i(\xi_i)}=\begin{cases}
        \phi(\xi_i) & u_i^{0}= u_i^+(\xi_i)\\
        \pi-\phi(\xi_i)+2n(\xi_i)\pi  & \text{otherwise}.
    \end{cases}
\end{equation}
where similarly $n(\xi_i)\in\{0, -1\}$ is set so that $\phi'\in [0, 2\pi)$. It is now clear by an inspection of Table \ref{tab:uaequaltoub} that $Q_0^{IV}(u_A=u_B|\xi_A, \xi_B) \equiv Q^V(u_A=u_B|\xi_A, \xi_B)$ so that $J(Q_0;D)= J(Q;D)$. We quickly illustrate one such inspection for the reader's clarity. Suppose that the assignment corresponding to our arbitrary $Q$ obeys $u^+_A(\xi_A)=u_A^1$ and $u^+_B(\xi_B)=u_B^0$ for a particular tuple $\xi_A, \xi_B$ which pertains to row V. Our transformation \eqref{eq:theta0c1}, \eqref{eq:phi0c1} then ensures $\cos^2\frac{\phi_{a(\xi_A)}^0}{2}=\sin^2\frac{\phi_{a(\xi_A)}}{2}$ and $\beta(\alpha^0, \theta^0, \phi^0)=-\beta(\alpha, \theta, \phi)$ so that $Q^{IV}_{0}(u_A=u_B|\xi_A, \xi_B)=Q^V(u_A=u_B|\xi_A, \xi_B)$.  \\
\textit{Case 2:} $C_2=\{u^+_j(\xi_j^*)=u^-_j(\xi_j^*)$ for a unique pair $(j, \xi_j^*) \in\{A, B\}\times \{0, 1\}\}$.

We have $|C_2|=64$ and $v\notin C_2$. Let $Q':=(\alpha', \theta', \phi', \{w^\pm_i\}_i)\in\widehat\Qscr$ where $w\in C_1$.
Use $\alpha'=\alpha$, $w^+_j(\xi^*_A)=u^+_j(\xi^*_j)\neq w^-_j(\xi^*_j)$ and $w^\pm_i(\xi_i)\equiv u^\pm_i(\xi_i)$ for $i\neq j$.  We specialise to $j=A, i=B$ as the argument for $j=B, i=A$ will then similarly follow. We have two subcases within \textit{case 2}.
\\
\textit{Case 2a:} $w^+_B(\xi_B)=w^+_A(\xi_A^*)$ so that $Q'(u_A=u_B|\xi_A^*, \xi_B)=Q'^{IV}(u_A=u_B|\xi_A^*, \xi_B)$ and $Q(u_A=u_B|\xi_A, \xi_B)=Q^{II}(u_A=u_B|\xi_A, \xi_B)$. Set $\phi'_{a(\xi_A^*)}\equiv \alpha$ and $\phi'_{b(\xi_B)}\equiv \phi_{b(\xi_B)}$. Then,
\begin{align*}
    \Lambda(Q, Q')
    &=\sum_{\xi_A,\xi_B} \kappa(\xi_A, \xi_B) d(Q', Q|\xi_A, \xi_B)\\
    &={-\sin\alpha}/{2} \sin\phi_{a(\xi_A^*)}\times \\
    &\quad \sum_{\xi_B}\kappa(\xi_A^*, \xi_B) \sin\phi_{b(\xi_B)}\cos(\theta'_{a(\xi_A^*)}+\theta'_{b(\xi_B)}).
\end{align*}

\noindent \textit{Case 2b:} $w^-_B(\xi_B)=w^+_A(\xi_A^*)$  so that  $$Q'(u_A=u_B|\xi_A^*, \xi_B)=Q'^{V}(u_A=u_B|\xi_A^*, \xi_B)$$ and $Q(u_A=u_B|\xi_A, \xi_B)=Q^{III}(u_A=u_B|\xi_A, \xi_B)$. Similarly set $\phi'_{a(\xi_A^*)}\equiv \alpha$ and $\phi'_{b(\xi_B)}\equiv \phi_{b(\xi_B)}$. Then, $\Lambda(Q, Q')=$ $$
\frac{\sin\alpha}{2} \sin\phi_{a(\xi_A^*)}\sum_{\xi_B}\kappa(\xi_A^*, \xi_B) \sin\phi_{b(\xi_B)}\cos(\theta'_{a(\xi_A^*)}+\theta'_{b(\xi_B)}).$$
For each of the above two cases, notice that the transformation $\theta'_{a(\xi_A^*)}\to \theta'_{a(\xi_A^*)}\pm\pi$ takes $\Lambda(Q',Q)\to -\Lambda(Q',Q)$ so $\exists\ \theta'_{a(\xi_A^*)}$ such that $\Lambda(Q, Q')\leq 0$. This provides the construction of a $Q'$ for every $Q$ such that $J(Q'; D)\leq J(Q; D)$. Since $Q'$ pertains to \textit{case 1} of our proof, it follows that $\exists$ a $Q_0(\alpha^0, \theta^0, \phi^0)$ obeying $J(Q_0(\alpha^0, \theta^0, \phi^0);D)\leq J(Q';D)\leq J(Q;D)$.

\textit{Case 3:}  $C_3=\{\exists\ (\xi_A^*, \xi_B^*)\in \{0, 1\}^2$ such that $u_i^+(\xi^*_i)=u_i^-(\xi^*_i)$  and $u_i^+(\xi'_i)\neq u_i^-(\xi'_i)$ for each $i\in\{A, B\}$ where $\xi'_i:=\sim\xi_i^*\}$.

$|C_3|=64$. Consider a $Q':=(\alpha, \theta, \phi', \{w^\pm_i\}_i)$ with $w\in C_1$ and $w_i^\pm(\xi'_i)=u_i^\pm(\xi'_i)$ for each $i$. Indeed then,
\begin{align*}
    &\Lambda(Q',Q)=J(Q;D)- J(Q';D)\\
    &=\kappa(\xi_A^*, \xi_B^*)d(Q', Q|\xi_A^*, \xi_B^*)+\kappa(\xi_A^*, \xi_B')d(Q', Q|\xi_A^*, \xi_B')\\ &+\kappa(\xi_A', \xi_B^*)d(Q', Q|\xi_A', \xi_B^*).
\end{align*}
We have the following two sub-cases.\\
\textit{Case 3a:} $u_A^+(\xi_A^*)=u_B^+(\xi'_B)$ so that
$$Q(u_A=u_B|\xi_A^*, \xi_B^*)=Q^I(u_A=u_B|\xi_A^*, \xi_B^*),$$ $$Q(u_A=u_B|\xi_A^*, \xi_B')=Q^{II}(u_A=u_B|\xi_A^*, \xi_B')$$ and
$Q(u_A=u_B|\xi_A', \xi_B^*)=Q^{II}(u_A=u_B|\xi_A', \xi_B^*)$.
Set $w_A^+(\xi^*_A)=w_B^+(\xi_B')=u_B^+(\xi_B')$ and $\phi'_{i(\xi_i')}\equiv \phi_{i(\xi_i)}$. Now if $\phi_{a(\xi_A^*)}=\phi_{b(\xi_B^*)}=0$, we find upon straightforward evaluation that $\Lambda(Q',Q)=-\sin^2\frac{\alpha}{2}(\kappa(\xi_A^*, \xi_B')\cos\phi_{b(\xi_B')}
+\kappa(\xi_A', \xi_B^*)\cos\phi_{a(\xi_A')})$
and that if $\phi_{a(\xi_A^*)}=\phi_{b(\xi_B^*)}=\pi$, $\Lambda(Q',Q)=
\cos^2\frac{\alpha}{2}(\kappa(\xi_A^*, \xi_B')\cos\phi_{b(\xi_B')}
+\kappa(\xi_A', \xi_B^*)\cos\phi_{a(\xi_A')})$
so one of the above two ensure $\Lambda\leq 0$.
\\
\textit{Case 3b:} $u_A^+(\xi_A^*)\neq u_B^+(\xi_B')$ in which case, $$Q(u_A=u_B|\xi_A^*, \xi_B^*)=Q^I(u_A=u_B|\xi_A^*, \xi_B^*),$$ $$Q(u_A=u_B|\xi_A^*, \xi_B')=Q^{III}(u_A=u_B|\xi_A^*, \xi_B')$$ and
$$Q(u_A=u_B|\xi_A', \xi_B^*)=Q^{III}(u_A=u_B|\xi_A', \xi_B^*).$$

Again set $w_A^+(\xi^*_A)=w_B^+(\xi_B')=u_B^+(\xi_B')$ and $\phi'_{i(\xi_i')}\equiv \phi_{i(\xi_i)}$. Now if $\phi_{a(\xi_A^*)}=0,\phi_{b(\xi_B^*)}=\pi$, we find upon straightforward evaluation that $\Lambda(Q',Q)=
-\cos^2\frac{\alpha}{2}(\kappa(\xi_A^*, \xi_B')\cos\phi_{b(\xi_B')}
+\kappa(\xi_A', \xi_B^*)\cos\phi_{a(\xi_A')})$
and that if $\phi_{a(\xi_A^*)}=\pi, \phi_{b(\xi_B^*)}=0$, $\Lambda(Q',Q)=$,
$\sin^2\frac{\alpha}{2}(\kappa(\xi_A^*, \xi_B')\cos\phi_{b(\xi_B')}
+\kappa(\xi_A', \xi_B^*)\cos\phi_{a(\xi_A')})$
so one of the above two ensure $\Lambda\leq 0$.

Now note that $Q'\in C_1$ in both the sub-cases. Hence, there exists a $Q_0(\alpha^0, \theta^0, \phi^0)$  can be found, as follows from the proof of \textit{Case 1}. This completes the proof for \textit{Case 3}.

\textit{Case 4:} $\{\exists$ a unique $j\in\{A, B\}$ such that $u_j^+(\xi_j)=u_j^-(\xi_j)$ $\forall\ \xi_j$ and $u_i^+(\xi_i)\neq u_i^-(\xi_i)\ \forall \xi_i$, for $i\neq j\}$.

$|C_4|=112$. We argue for $j=A$, and $j=B$ similarly follows. We have
$u_A^+(\xi_A)=u_A^-(\xi_A)=u(\xi_A)\ \forall\xi_A$ then notice from Table \ref{tab:alphaprobs} from the main paper (Proposition~\ref{prop:alphaprob}),
$Q(u_A, u^{+}_B(\xi_B)|\xi_A, \xi_B)=\left(\cos^2\frac{\alpha}{2}\cos^2\frac{\phi_{b(\xi_B)}}{2}+\sin^2\frac{\alpha}{2}\sin^2\frac{\phi_{b(\xi_B)}}{2}\right)\delta_{u_A u(\xi_A)}$\\
$Q(u_A, u^{-}_B(\xi_B)|\xi_A, \xi_B) =(\cos^2\frac{\alpha}{2}\sin^2\frac{\phi_{b(\xi_B)}}{2}$\\
$+\sin^2\frac{\alpha}{2}\cos^2\frac{\phi_{b(\xi_B)}}{2})\delta_{u_A u(\xi_A)}$

Now consider a $Q'=(0, \theta', \phi', \{w_i^\pm(\xi_i)\}_i)$ where $\theta'_{i(\xi_i)}=\theta_{i(\xi_i)}\ \forall\ i$, $\phi_{a(\xi_A)}\equiv 0$, $w\in C_1$, $w^+_A(\xi_A)=u(\xi_A)\neq w^-_A(\xi_A)$, $w_B^\pm(\xi_B)\equiv u^\pm_B(\xi_B)$ and $\phi'_{b(\xi_B)}$ is set by the equation
$$\cos^2\phi'_{b(\xi_B)}=\left(\cos^2\frac{\alpha}{2}\cos^2\frac{\phi_{b(\xi_B)}}{2}+\sin^2\frac{\alpha}{2}\sin^2\frac{\phi_{b(\xi_B)}}{2}\right).$$
The existence of such a $\phi'$ is guaranteed since RHS in above equation is in $[0, 1]$. It is then immediate by inspection (again using Table \ref{tab:alphaprobs}) that $Q'\equiv Q\implies J(Q';D)=J(Q;D)\implies \Lambda(Q',Q)\leq 0$. Again, since $Q'\in C_1$, the required $Q_0(\alpha^0, \theta^0, \phi^0)$ can be found to settle case 4. This completes our proof.

\section{Proof of Theorem \ref{thm:tightboundsCAC}}\label{proof:tightboundsCAC}
Define,
$\Delta^*(\chi):=\inf_{Q\in\Qscr}J(Q;D)-\inf_{\pi\in\Pi}J(\pi;D)\}.$
Then from \eqref{eq:detoptimas}  and \eqref{eq:phioptimum} (from the main paper), we have $\Delta^*(\chi)\leq \Delta(\chi, \phi)$ for all $\chi>0, \phi$ where we define $\Delta(\chi, \phi):=$
\begin{multline*}
    \begin{cases}
        \underline{J}(\chi, \phi_{a1}, \phi_{b0}, \phi_{b1})+(s+k+2t)&\chi\in \left(0, \frac{k+t}{s+t}\right]\\
        \overline{J}(\chi, \phi_{a1}, \phi_{b0}, \phi_{b1})+(1+\chi)(s+t)&\chi\in \left(\frac{k+t}{s+t},\frac{s+t}{k+t}\right)\\
        \overline{J}(\chi, \phi_{a1}, \phi_{b0}, \phi_{b1})+\chi(s+k+2t)&\chi\in \left[\frac{s+t}{k+t}, \infty\right)
    \end{cases}
\end{multline*}
where $\overline J(\chi, \phi_{a1}, \phi_{b0}, \phi_{b1})$ and $\underline J(\chi, \phi_{a1}, \phi_{b0}, \phi_{b1})$ are given by \eqref{eq:phipar} and \eqref{eq:phipar2} (from the main paper) respectively.
The proof idea is as follows.
\begin{itemize}
    \item We show existence of a $\phi$ such that $\Delta(\chi, \phi)<0$ which implies $\Delta^*(\chi)<0$ corresponding to the presence of quantum advantage at $\chi$, for all $\chi\in\Xscr$.

    \item  We show $\Delta(\chi, \phi)\geq 0\forall\ \phi\in [0, \pi], \chi\in\Xscr^c$ which asserts the absence of quantum advantage for all $\chi\in \Xscr^c$ (Recall from Corollary \ref{prop:pienough} and \eqref{eq:phioptimum} (from the main paper) that $\Delta^*(\chi)$ is attained by a tuple $\phi\in[0, \pi]$).
\end{itemize}
1) Consider the interval $\chi\in\left(\frac{k+t}{s+t},\frac{s+t}{k+t}\right)$.  $\Delta^*(\chi)\leq \Delta(\chi,  \phi_{a1}, \phi_{b0}, \phi_{b1}):=\overline{J}(\chi, \phi_{a1}, \phi_{b0}, \phi_{b1})+(1+\chi)(s+t)=$
\begin{align*}
    &\left(\frac{-\chi k+s}{2}\right)\left(1-\cos{\phi_{b0}}\right) +\frac{t(\chi -1)}{2}\left(\cos{\phi_{b1}}+\cos{(\phi_{a1}+\phi_{b0})}\right)\\
    &+\left(\frac{\chi s-k}{2}\right)\left(1+\cos{(\phi_{a1}+\phi_{b1})}\right)
\end{align*}
Notice $\Delta(\chi, \pi, 0, 0)=\nabla_\phi \Delta(\chi, \pi, 0, 0)=0\ \forall\chi$ and consider the Hessian evaluated at this point is
$\begin{pmatrix}
    \frac{-\chi k + s+t (\chi  - 1)}{2} & \frac{t (\chi  - 1)}{2} & 0\\
    \frac{t (\chi  - 1)}{2} & \frac{t (\chi  - 1)+(\chi s - k)}{2} & \frac{(\chi s - k)}{2}\\
    0 & \frac{(\chi s - k)}{2} &  \frac{(\chi s - k)+t (\chi - 1)}{2}
\end{pmatrix}$.\\
The determinant of the Hessian evaluates to $\frac{1}{8}(-1 + \chi )^2 (1 + \chi ) (k - s) t^2$
which is negative for all $\chi$ except $\chi=1$ since $s>k$.
Thus, the point $(\chi, \pi, 0, 0)$ is not a local minimum and by Taylor's theorem, $\exists (\chi, \phi^*)$ such that  $\Delta(\chi, \phi^*)<0$ so that $\forall\chi\in\left(\frac{k+t}{s+t}, \frac{s+t}{k+t}\right)\setminus \{1\}\implies \Delta^*(\chi)<0$.
To show that $\Delta^*(1)\geq 0$, notice that $\Delta(1, \phi_{a1}, \phi_{b0}, \phi_{b1})=((s-k)/2)(2-\cos\phi_{b0}+\cos(\phi_{a1}+\phi_{b1}))\geq 0 \forall \phi$. We have shown that $D$ admits a quantum advantage for all $\chi \in \left(k+t/s+t, s+t/k+t\right)\setminus\{1\}=:\Xscr_1$.

2) Now for the interval $\chi\in\left[\frac{s+t}{k+t}, \infty\right)$ where $\Delta^*(\chi) \leq \Delta(\chi,  \phi_{a1}, \phi_{b0}, \phi_{b1})=\overline{J}(\chi, \phi_{a1}, \phi_{b0}, \phi_{b1})+\chi(k+s+2t):=$
\begin{align}
    &\left(\frac{\chi k-s}{2}\right)\left(1+\cos{\phi_{b0}}\right) +\left(\frac{\chi s-k}{2}\right)\left(1+\cos{(\phi_{a1}+\phi_{b1})}\right)\nonumber\\
    &+\frac{t(\chi -1)}{2}\left(2+\cos{\phi_{b1}}+\cos{(\phi_{a1}+\phi_{b0})}\right)\label{eq:delta2}
\end{align}
Notice that $\Delta(\chi, 0, \pi, \pi)=\nabla_\phi \Delta(\chi, 0, \pi, \pi)=0\forall\chi$. Further the Hessian evaluated at this point is
$\begin{pmatrix}
    \frac{\chi k - s+t (\chi  - 1)}{2} & \frac{t (\chi  - 1)}{2} & 0\\
    \frac{t (\chi  - 1)}{2} & \frac{t (\chi  - 1)+(\chi s - k)}{2} & \frac{(\chi s - k)}{2}\\
    0 & \frac{(\chi s - k)}{2} & \frac{t (\chi  - 1)+(\chi s - k)}{2}
\end{pmatrix}$
and its determinant given by	$(\frac{1}{8}) (\chi -1) t (2 (\chi  k - s) (\chi s-k) + (-1 + \chi )^2 (k + s) t).$
Since $s>k$, $\chi>1$, the Hessian has a negative determinant for all $\chi$ satisfying
\begin{equation}\label{eq:quaddef}
    \overline f(\chi):=(2 (\chi  k - s) (\chi s-k) + (-1 + \chi )^2 (k + s) t)<0.
\end{equation}
This translates to $\chi<\chi^{th}$ since $\chi^{th}$ is the larger root of the quadratic $\overline{f}(\chi)$.
and we thus have existence of a $ (\chi, \phi^*)$ such that $\Delta(\chi, \phi^*)<0$ so that $\forall\chi\in\left[\frac{s+t}{k+t}, \chi^{th}\right),  \Delta^*(\chi)<0$.
Next, we show that $\Delta^*(\chi)\geq 0$ $\forall \chi>\chi^{th}$.
In Proposition \ref{prop:globalmin} below, we have shown that the global minimum of $ \Delta(\chi^{th}, \phi)$ in the cube $\phi \in [0, \pi]^3$  occurs at one of the vertices. Substituting each of the vertices in \eqref{eq:delta2} and using the fact that $\chi^{th}>\frac{s+t}{k+t}>1>\frac{k+t}{s+t}$
reveals that the global minimum is attained at $\phi=(0, \pi, \pi)$ \ie $	\Delta(\chi^{th}, 0,\pi,\pi)= 0$. We skip the explicit evaluations due to paucity of space.

This establishes $\Delta^*(\chi)=0$ for all $\chi\geq \chi^{th}$ since for any $\phi$, $\partial_{\chi} \Delta(\chi, \phi)=
\frac{1}{2}(k(1+\cos\phi_{b0})+t(2+\cos\phi_{b1}+\cos(\phi_{a1}+\phi_{b0}))\\
s(1+\cos(\phi_{a1}+\phi_{b1})))\geq 0\forall\phi$.  Our arguments have thus shown that quantum advantage is present for $\chi\in [s+t/k+t, \chi^{th})=:\Xscr_2$ and absent for $\chi>\chi^{th}$.

3) Finally for the interval $\left(0, \frac{k+t}{s+t}\right)$, $\Delta^*(\chi)\leq \Delta(\chi,  \phi_{a1}, \phi_{b0}, \phi_{b1})=$
\begin{align}
    & (\frac{\chi k-s}{2})(1+\cos\phi_{b0})+(\frac{\chi t-t}{2})(2+\cos\phi_{b1}+\cos(\phi_{a_1}-\phi_{b_0})) \nonumber\\
    &+(\frac{\chi s-k}{2})(1+\cos(\phi_{a1}+\phi_{b1})) -(\chi-1)(s+k+2t)\label{eq:delta3}
\end{align}
Again, $\Delta(\chi, 0,0,0)=\nabla_\phi \Delta(\chi, 0,0,0)=0$. Further, Hessian evaluated at this point is
$\begin{pmatrix}
    \frac{\chi k - s}{2} + \frac{t (\chi  - 1)}{2} & \frac{t (\chi  - 1)}{2} & 0\\
    \frac{t (\chi  - 1)}{2} & \frac{t (\chi  - 1)}{2} + \frac{(\chi s - k)}{2} & \frac{(\chi s - k)}{2}\\
    0 & \frac{(\chi s - k)}{2} & \frac{t (\chi  - 1)}{2} + \frac{(\chi s - k)}{2}
\end{pmatrix}$
and has the determinant$
(\frac{1}{8} )(1-\chi) t (2 (\chi  k - s) (\chi s-k) + (-1 + \chi )^2 (k + s) t).$
Thus, the Hessian has a negative eigenvalue for
$\overline f(\chi)<0.$
This translates to $\chi>\chi_{th}$ since $\chi_{th}$ is the smaller root of $\overline f(\chi)$. Rest follows as before and we have $\Delta^* (\chi)<0$ for $\chi\in (\chi_{th}, 1)$.
For $\chi<\chi_{th}$, the point $(\chi, 0, 0, 0)$ turns into a global minimum, and quantum advantage is lost for $\chi\leq \chi_{th}$. Our arguments that show this mirror those in part 2). We have shown in Proposition \ref{prop:globalmin} below that the global minimum of $\Delta(\chi_{th}, \phi)$ occurs on one of the vertices of the cube $\phi\in [0, \pi]^3$. We again find using $\chi_{th}<\frac{k+t}{s+t}<1<\frac{s+t}{k+t}$ and substituting each vertex in \eqref{eq:delta3} that the global minimum occurs at the $\phi=0,0,0$ and $\Delta^*(\chi_{th})=0$.

Now we can conclude $\Delta^*(\chi)=0$ for all $\chi\leq \chi_{th}$ since $\partial_{\chi} \Delta(\chi, \phi)=
\frac{1}{2}(k (-1 + \cos\phi_{b0}) +
t (-2 +\cos\phi_{a1} - \cos\phi_{b0} + \cos\phi_{b1})
+ s (-1 + \cos\phi_{a1} +\phi_{b1}))\leq 0\ \forall\phi$.
This demonstrates that the quantum advantage is present for $\chi\in (\chi_{th}, (k+t)/(s+t)]$ and absent for all $\chi<\chi_{th}$.

\begin{proposition}
    1) Suppose $\Delta(\chi^{th}, \phi)$ as defined in \eqref{eq:delta2} attains a global minimum at $\phi^*\in [0, \pi]^3$. Then $\phi^*$ is a vertex of the cube $[0, \pi]^3$.\\
    2) Suppose $\Delta(\chi_{th}, \phi)$ as defined in \eqref{eq:delta3} attains a global  minimum at $\phi^*\in [0, \pi]^3$. Then $\phi^*$ is a vertex of the cube $[0, \pi]^3$. \label{prop:globalmin}
\end{proposition}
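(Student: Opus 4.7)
The plan is to reduce the three-variable minimization to a one-dimensional problem by profile minimization over $(\phi_{b0},\phi_{b1})$ with $\phi_{a1}$ fixed, and then exploit the defining equation $\overline f(\chi^{th})=0$ to pin down the minimizer over $\phi_{a1}$. The payoff will be that the profile-minimizing $(\phi_{b0},\phi_{b1})$ land exactly at vertex coordinates when $\phi_{a1}$ hits its own optimal vertex value.

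First I would expand $\cos(\phi_{a1}+\phi_{b0})$ and $\cos(\phi_{a1}+\phi_{b1})$ via the angle-sum identity and collect terms. Writing $A:=(\chi k-s)/2$, $B:=(\chi s-k)/2$, $T:=t(\chi-1)/2$, this puts \eqref{eq:delta2} in the form
\begin{equation*}
\Delta=C_{0}+\alpha_{1}\cos\phi_{b0}+\beta_{1}\sin\phi_{b0}+\alpha_{2}\cos\phi_{b1}+\beta_{2}\sin\phi_{b1},
\end{equation*}
with $\alpha_{1}=A+T\cos\phi_{a1}$, $\beta_{1}=-T\sin\phi_{a1}$, $\alpha_{2}=T+B\cos\phi_{a1}$, $\beta_{2}=-B\sin\phi_{a1}$ depending only on $\phi_{a1}$. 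The two trigonometric pieces decouple. Because $B,T>0$ at $\chi^{th}$ and $\sin\phi_{a1}\ge 0$ on $[0,\pi]$, one has $\beta_{1},\beta_{2}\le 0$, and a short check shows this sign puts the unconstrained minimizer of each piece inside $[0,\pi]$, so the infimum over $\phi_{b_j}\in[0,\pi]$ equals $-\sqrt{\alpha_j^{2}+\beta_j^{2}}+C_{0,j}$. Writing $c:=\cos\phi_{a1}$ and summing, the reduced objective is
\begin{equation*}
F(c)=(A+B+2T)-\sqrt{A^{2}+2ATc+T^{2}}-\sqrt{B^{2}+2BTc+T^{2}}.
\end{equation*}

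Next I would establish two facts about $F$ on $[-1,1]$. Strict convexity is automatic: each radicand is a positive affine function of $c$, so its square root is concave and $F$ is a sum of two negatives of concave functions plus a constant, hence strictly convex. For the critical-point data at $c=1$, the inequalities $(s+t)/(k+t)<\chi^{th}<s/k$ give $A+T>0$ and $B+T>0$ (while $A<0$), so the radicals at $c=1$ simplify to $A+T$ and $B+T$ and thus $F(1)=0$; differentiation gives
\begin{equation*}
F'(1)=\frac{T\bigl[-2AB-T(A+B)\bigr]}{(A+T)(B+T)},
\end{equation*}
and after substituting $A+B=(\chi-1)(k+s)/2$ and $T=t(\chi-1)/2$ the bracket equals $-\overline f(\chi^{th})/(2t)$, which vanishes precisely because $\chi^{th}$ is a root of $\overline f$. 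Strict convexity plus $F'(1)=0$ makes $c=1$ the unique minimizer of $F$, with value $0$. At $\phi_{a1}=0$ the sine terms disappear and the separated problems reduce to minimizing $(A+T)\cos\phi_{b0}$ and $(B+T)\cos\phi_{b1}$; since both coefficients are positive, the unique minimizers are $\phi_{b0}=\phi_{b1}=\pi$. Hence the global minimum of $\Delta(\chi^{th},\cdot)$ is attained uniquely at the vertex $(0,\pi,\pi)$, proving part (1).

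Part (2) will run along the same lines: in \eqref{eq:delta3} one has $A'<0$, $B'>0$, $T'<0$, so $\beta_{1}'=T'\sin\phi_{a1}$ and $\beta_{2}'=-B'\sin\phi_{a1}$ are again non-positive, $F$ is again strictly convex, and the same algebraic identity forces $F'(1)=0$ at $\chi_{th}$ (using $\overline f(\chi_{th})=0$); at $\phi_{a1}=0$ the signs $A'+T'<0$ and $B'+T'<0$ (which follow from $\chi_{th}<(k+t)/(s+t)$) now push the minimizers to $\phi_{b0}=\phi_{b1}=0$, giving the vertex $(0,0,0)$. The main obstacle is recognizing the bracket $-2AB-T(A+B)$ as a rescaled copy of $\overline f$, which is the reason the argument works only at the threshold values $\chi_{th},\chi^{th}$ and breaks down strictly inside the quantum-advantage interval, where $F'(1)\ne 0$ and the minimum slides off the vertex $c=1$ into the interior of $[-1,1]$.
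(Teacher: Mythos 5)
Your proof is correct, but it follows a genuinely different route from the paper's. The paper characterizes the full solution set of $\nabla_\phi\Delta=0$ on the cube: it shows every vertex is a critical point, that faces and edges contain no other critical points, and that an interior critical point would force $\cos\phi_{a1}=\tfrac12\bigl(\beta^2/\alpha^2-\beta^2/\delta^2\bigr)=1$ precisely because $\overline f(\chi^{th})=0$ (resp.\ $\overline f(\chi_{th})=0$), a contradiction; the extreme value theorem plus a recursive face-and-edge analysis then pushes the global minimizer to a vertex. You instead minimize out $(\phi_{b0},\phi_{b1})$ in closed form (the sign condition $\beta_j\le 0$ correctly places the unconstrained minimizer of each piece $\alpha_j\cos\phi+\beta_j\sin\phi$ inside $[0,\pi]$), which collapses the problem to the one-dimensional convex function $F(c)$ of $c=\cos\phi_{a1}$; the defining relation $\overline f(\chi^{th})=0$ reappears as $F'(1)=0$, and convexity (with the one-sided derivative at the endpoint) pins the unique minimizer at $c=1$, after which the coefficient signs $A+T>0,\ B+T>0$ (resp.\ $A'+T'<0,\ B'+T'<0$) force $\phi_{b0},\phi_{b1}$ to the vertex values. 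Both arguments ultimately hinge on the same algebraic identity — your bracket satisfies $-2AB-T(A+B)=-\overline f(\chi)/4$ (your stated factor $1/(2t)$ is a harmless slip; only the vanishing matters) — but your route avoids the paper's case analysis over faces and edges and, as a bonus, directly identifies the minimizing vertex as $(0,\pi,\pi)$ (resp.\ $(0,0,0)$), which is exactly the additional fact the proof of Theorem \ref{thm:tightboundsCAC} otherwise obtains by substituting all eight vertices. Like the paper, your argument implicitly uses $t>0$ (so that $T\neq 0$ and $F$ is strictly convex) and the bounds $k/s<\chi_{th}<\frac{k+t}{s+t}$, $\frac{s+t}{k+t}<\chi^{th}<s/k$ established in the surrounding text; it would be worth stating these inputs explicitly, and spelling out the part (2) computation at the same level of detail as part (1), but no step fails.
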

\begin{proof}
    1) It is easy to check $\overline f(s/k)>0$ which implies  $s/k>\chi^{th}$ so let $\alpha^2=-{(\chi^{th} k-s)}/{2}$, $\beta^2=t(\chi^{th} -1)/2$ and $\delta^2=(\chi^{th} s-k)/2$. We then have (for brevity we change notation to $\Delta(\phi):=\Delta(\chi^{th}, \phi)$) the system $\nabla_\phi\Delta(\phi)$ given by,
    \begin{align}\label{eq:sys1}
        &\partial_{\phi_{b0}} \Delta=\alpha^2 \sin{\phi_{b0}}-\beta^2\sin(\phi_{a1}+\phi_{b0})=0\\
        \label{eq:sys2}
        &\partial_{\phi_{a1}} \Delta=-\beta^2\sin(\phi_{a1}+\phi_{b0})-\delta^2\sin(\phi_{a1}+\phi_{b1})=0\\
        \label{eq:sys3}
        &\partial_{\phi_{b1}} \Delta=-\beta^2\sin(\phi_{b1})-\delta^2\sin(\phi_{a1}+\phi_{b1})=0
    \end{align}
    We now find all solutions on $C:=[0, \pi]^3$. It is easy to notice that each vertex of $C$ is  a solution. Further we also notice that the vertices are the only solution on each of the faces of $C$. To check this, notice how $\phi_{a1}\in \{0, \pi\}$ forces $\phi_{b0}, \phi_{b1}\in\{0, \pi\}$ from \eqref{eq:sys1} and \eqref{eq:sys3} and repeat this argument for other faces.

    Now, from $\eqref{eq:sys1}$, notice that $\sin\phi_{b0}\geq 0\implies \sin(\phi_{a1}+\phi_{b0})\geq 0\implies \phi_{a1}+\phi_{b0}\in [0,\pi]$. Comparing \eqref{eq:sys2} and \eqref{eq:sys3}, we obtain $\sin\phi_{b1}=\sin(\phi_{a1}+\phi_{b0})$ so that $\phi_{b1}\in \{\phi_{a1}+\phi_{b0}, \pi-(\phi_{a1}+\phi_{b0})\}$. Suppose that $\phi_{b1}= \pi-(\phi_{a1}+\phi_{b0})$ and substitute so in \eqref{eq:sys2} to obtain $\sin\phi_{b0}/\sin(\phi_{a1}+\phi_{b0})<0$ which contradicts \eqref{eq:sys1}. Thus $\phi_{b1}=\phi_{a1}+\phi_{b0}$, so substitute in \eqref{eq:sys2} and using $b^2=\beta^2/\alpha^2$ and $d^2=\delta^2/\beta^2$ , rewrite \eqref{eq:sys1} and \eqref{eq:sys2} respectively as $\sin\phi_{b0}=b^2\sin(\phi_{a1}+\phi_{b0}),$
    $\sin(\phi_{a1}+\phi_{b0})+d^2\sin(2\phi_{a1}+\phi_{b0})=0.$ from where it follows that
    \begin{equation}\label{eq:sys4}
        \sin\phi_{b0}=-b^2d^2\sin{(\phi_{b0}+\phi_{a1})}.
    \end{equation}
    We now show there is no simultaneous solution to \eqref{eq:sys1}, \eqref{eq:sys2} and \eqref{eq:sys3} in $(0, \pi)^3$. Suppose there is one. Expanding sinusoidal sums in \eqref{eq:sys4} and division by $\sin\phi_{b0}$ recovers $b^2(\cos\phi_{a1}+\cot{\phi_{b0}}\sin{\phi_{a1}})=1$. This allows the following simplification.
    \begin{align*}
        &-b^2d^2(\cos2\phi_{a1}+\cot{\phi_{b0}}\sin{2\phi_{a1}})=1\\
        & -b^2d^2(-1+2\cos{\phi_{a1}}(\cos\phi_{a1}+\cot\phi_{b0}\sin\phi_{a1}))=1\\
        & \cos\phi_{a1}=\frac{1}{2}(b^2-1/d^2)=\frac{1}{2}\left(\frac{\beta^2}{\alpha^2}-\frac{\beta^2}{\delta^2}\right)
    \end{align*}
    Now recall $\overline f(\chi)$ from \eqref{eq:quaddef} and observe $\overline{f}(\chi^{th})=0$ that asserts the following train of calculations, using substitutions for $\alpha^2, \beta^2$ and $\delta^2$
    \begin{align*}
        &(2 (\chi  k - s) (\chi s-k) + (-1 + \chi )^2 (k + s) t)=0\\
        & \frac{t(\chi-1)^2(s+k)}{2(s-\chi k)(\chi s-k)}=1\\
        & \frac{1}{2}(b^2-1/d^2)=\cos\phi_{a1}=1\implies \phi_{a1}=0.
    \end{align*}
    This contradicts $\phi\in (0,\pi)^3$. We have thus shown that the solution set to $\nabla_{\phi}\Delta(\phi)=0$ is precisely the set of vertices of $C$. Extreme value theorem thus dictates that a global minimum of $\Delta(\phi)$ is attained on the boundary of $C$. Now similar arguments can be repeated on a face of $C$ to push the location of the global minimizer to the edges, and subsequently the vertices. We show one such example and the rest follows similarly. So look at the face $\phi_{a1}=0$, let $\Delta(0, \phi_{b0}, \phi_{b1})=\delta(\phi)$. Then substituting $\phi_{a1}$ in \eqref{eq:sys1} and \eqref{eq:sys3} forces the solution to $\nabla_\phi \delta=0$ on a vertex. Thus, $\delta$ attains a minimum on one of the edges of the face $\phi_{a1}=0$. So to look at the edge  $\phi_{a1}=0, \phi_{b1}=0$, substitute this in \eqref{eq:sys1} and notice that $\Delta(\chi^{th}, 0, \phi_{b0}, 0)$ attains a minimum on a vertex of C. It is straightforward to repeat this argument for all faces and edges which will establish the proposition.

    2) The proof scheme is exactly the same. We prove the less trivial part again, i.e. to show that $\nabla_\phi \Delta(\chi_{th}, \phi)=0$ has no solution in the interior of $C$. Note that $\overline{f}(k/s)<0$ so $1>\frac{k+t}{s+t}>\chi_{th}>k/s$ and let $\alpha^2=-{(\chi_{th} k-s)}/{2}$, $\beta^2=-t(\chi_{th} -1)/2$ and $\delta^2=(\chi_{th} s-k)/2$.Then  (with the notation $\Delta(\phi):=\Delta(\chi_{th}, \phi)$) the system $\nabla_\phi\Delta(\phi)$ given by,
    \begin{equation}\label{eq:syss1}
        \partial_{\phi_{b0}} \Delta=\alpha^2 \sin{\phi_{b0}}-\beta^2\sin(\phi_{a1}-\phi_{b0})=0
    \end{equation}
    \begin{equation}\label{eq:syss2}
        \partial_{\phi_{a1}} \Delta=\beta^2\sin(\phi_{a1}-\phi_{b0})-\delta^2\sin(\phi_{a1}+\phi_{b1})=0
    \end{equation}
    \begin{equation}\label{eq:syss3}
        \partial_{\phi_{b1}} \Delta=\beta^2\sin(\phi_{b1})-\delta^2\sin(\phi_{a1}+\phi_{b1})=0
    \end{equation}
    As before, all vertices of $C$ are solutions to the system and vertices are the complete set of solutions on the boundary. As before, we argue $\phi_{b1}=\phi_{a1}-\phi_{b0}$ and arrive at $\sin\phi_{b0}=b^2 \sin{\phi_{a1}-\phi_{b0}}$ and $\sin\phi_{b0}=b^2d^2 \sin(2\phi_{a1}-\phi_{b0})$. Following manipulations similar to before, one can show that this forces $\phi$ to vertex. Maintaining rest of the proof scheme completes the proof of part 2.
\end{proof}


\bibliographystyle{IEEEtran}
\bibliography{ref}

\begin{biography}[{\includegraphics[width=1in,height=1in]{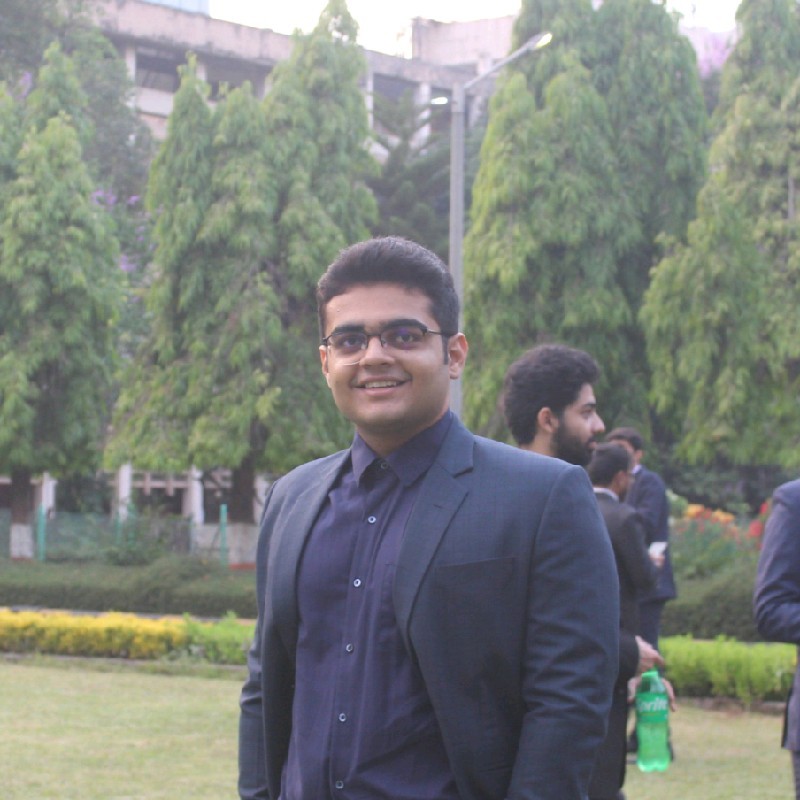}}]{Shashank Aniruddha Deshpande} a graduate student at the Massachusetts Institute of Technology in the Department of Aeronautics and Astronautics. Before this, he was an undegraduate the Department of Physics, and, the Department of Systems and Control Engineering at IIT Bombay, India. His research interests broadly span control and optimization of stochastic and networked systems.
\end{biography}
\begin{biography}[{\includegraphics[width=1in, keepaspectratio]{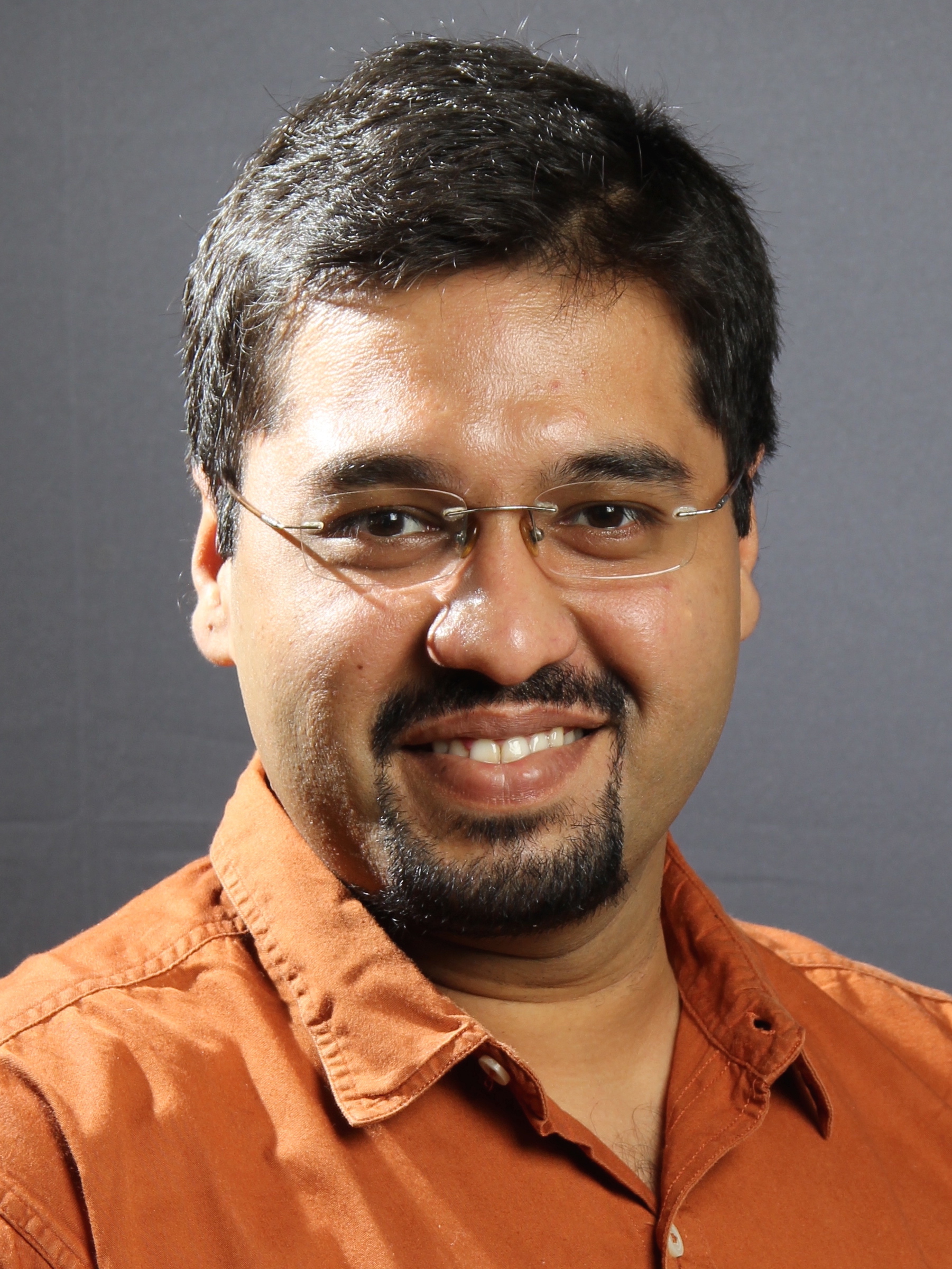}}]{Ankur A. Kulkarni}
	is the Kelkar Family Chair Professor with the Systems and Control Engineering group at the Indian Institute of Technology Bombay (IITB). He received his B.Tech. from IITB in 2006, followed by M.S. in 2008 and Ph.D. in 2010, both from the University of Illinois at Urbana-Champaign (UIUC). From 2010-2012 he was a post-doctoral researcher at the Coordinated Science Laboratory at UIUC. He was an Associate (from 2015--2018) of the Indian Academy of Sciences, a recipient of the INSPIRE Faculty Award of the Dept of Science and Technology Govt of India in 2013. He has won
	several best paper awards at conferences, the Excellence in Teaching Award in 2018 at IITB, and the William A. Chittenden Award in 2008 at UIUC.
	He has been a consultant to  the Securities and Exchange Board of India (SEBI), the HDFC Life Insurance Company, Kotak Mahindra Bank Ltd and Bank of Baroda. He presently serves on the IT-Project Advisory Board of SEBI, as Research Advisor to the Tata Consultancy Services, and as Program Chair of the Indian Control Conference. He has been a visitor to MIT in USA, University of Cambridge in UK, NUS in Singapore, IISc in Bangalore and KTH in  Sweden.
\end{biography}

\end{document}